\newcommandx{\unsure}[2][1=]{\todo[linecolor=blue,backgroundcolor=blue!25,bordercolor=blue,#1]{#2}}
\newcommand*\pFq[6][8]{%
	\begingroup 
	\pFqmuskip=#1mu\relax
	\mathcode`\.=\string"8000
	\begingroup\lccode`\~=`\,
	\lowercase{\endgroup\let~}\pFqcomma
	{}_{\,#2}F_{\,#3}{\left[\genfrac..{0pt}{}{\,#4}{\,#5};#6\right]}%
	\endgroup
}
\newcommand{\pFqcomma}{\mskip\pFqmuskip}
\newtheorem{theorem}{Theorem}[section]
\newtheorem{lemma}{Lemma}[section]
\newtheorem{definition}{Definition}[section]
\newcommand{\bbint}[2]{\ensuremath{\;\backslash\!\!\!\!\backslash\!\!\!\!\!\int_{#1}^{#2}}}
\begin{document}
\title[Generalized Stieltjes transform]{Exact Evaluation and extrapolation of the divergent expansion for the  Heisenberg-Euler Lagrangian I: Alternating Case}

\author{Christian D. Tica \& Eric A. Galapon}
\address{Theoretical Physics Group, National Institute of Physics, University of the Philippines, Diliman Quezon City, 1101 Philippines}
\email{eagalapon@up.edu.ph}
\date{\today}

\maketitle
\begin{abstract}
       We applied the method of finite-part integration [Galapon E.A Proc.R.Soc A 473, 20160567(2017)] to evaluate in closed-form the exact one-loop integral representations of the Heisenberg-Euler Lagrangian from QED for a constant magnetic field  and magnetic-like self-dual background. We also devise a prescription based on the finite-part integration of a generalized Stieltjes integral to sum and extrapolate to the strong-field regime the alternating divergent weak-field expansions of the Heisenberg-Euler Lagrangians.

\end{abstract}

\section{Introduction}

Divergent power series solutions of asymptotic nature arise in various context of perturbation theory (PT) when their exact or convergent alternatives are difficult if not impossible to obtain \cite{boyd, dingle, wong2, olver1997asymptotics, le2012large}. An optimally truncated asymptotic power series delivers rapidly converging and experimentally relevant approximations to the solution while standard summation techniques such as Pade approximants and Borel summation are employed to extend their numerical utility beyond the relevant asymptotic regime \cite{hardy, bender1999advanced}. Summation procedures also provide the framework within which further analysis can be made that reveals important computational and non-perturbative aspect of the necessarily finite solution \cite{wong1,mera2018fast,borell,sara}. A large class of divergent power series solutions in various applications involve terms that alternate in sign and are characterized as Stieltjes series,
\begin{equation}\label{mopy}
     F(\beta) \sim  \sum_{k=0}^{\infty}a_k(-\beta)^{k}, \qquad \beta\to 0^{+},
\end{equation}
where $\beta$ is the real and positive perturbation parameter and $a_k$ are the positive-power moments $\mu_k$ of some positive function $\rho(x)$, $a_k = \mu_{k} = \int_{0}^{\infty} x^k \rho(x)\mathrm{d}x$.
The summation of a finite physical quantity, $F(\beta)$, that is represented by a divergent Stieltjes series can be carried out by various means when  $\beta$ is small \cite{bender1999advanced, mera2018fast,weniger1993summation,kazakov2002summation, gilewicz2002continued, alvarez2017new}. 
Most notably, Pad\'{e} approximants constructed from a Stieltjes series possess well-understood convergence properties \cite{bender1999advanced, baker1996pade}. A summation procedure \cite{jen} based on non-linear (Weniger) sequence transformation also delivers rapidly converging results.  
 
Standard summation procedures by themselves, are largely ineffective at the  formidable task \cite{costin2019resurgent, costin2, bender1994determination, hydro, hydro2, suslov2001summing, le1990hydrogen, weniger1996construction} of extrapolating the divergent PT expansion \eqref{mopy} to the opposite regime $\beta\to+\infty$ along the real line, given the leading-order behavior of $F(\beta)$ as $\beta\to\infty$. The main hurdle is to incorporate this given specific behavior in the construction of a suitable extrapolant. 

In \cite{Tica_royal}, we constructed extrapolants from the divergent Stieltjes-type weak-coupling, $\beta\to 0$, PT expansion for the ground-state energy $E(\beta)$ of various singular eigenvalue problems in quantum mechanics.  These extrapolants are convergent expansions that naturally incorporate the non-integer algebraic power leading-order behaviour of the ground-state energy $E(\beta)\sim \beta^{\nu}, |\nu|<1$, as $\beta\to\infty$ allowing us to compute $E(\beta)$ for extremely large real coupling values. The treatment however needs to be modified when the leading-order non-perturbative behavior is not an algebraic power of the perturbation parameter $\beta$. A well-known example is the hydrogen atom in a constant magnetic field of magnitude $B$ \cite{ hydro, hydro2,le1990hydrogen} where the logarithmic behavior of the binding energy $\mathcal{E}(B)\sim\frac{1}{2} (\ln B)^2$ in the non-perturbative regime $B\to\infty$ becomes essential in large magnetic fields beyond the neutron stars range. This peculiar strong-field behavior renders the summation of the divergent PT expansion for the energy eigenvalues difficult to perform \cite{cizik}. 

In this paper, we consider the exact one-loop integral formulation of the Heisenberg-Euler Lagrangian for the case of a constant magnetic field background and a magnetic-like self-dual background in both scalar and spinor QED. 
For all these cases, we will derive a closed-form for the Lagrangian using a method that relies on the novel representations of Hadamard's finite part as complex contour integrals \cite{galapon2}. We give a concise discussion of this representation along with two other equivalent representations in section \ref{sorat} and carry out the derivation of the closed-form for the Heisenberg-Euler Lagrangian in section \ref{meth}.  

In section \ref{bigaj}, we take 
on the summation and extrapolation to the strong-field regime of the divergent weak-field PT expansions of the Heisenberg-Euler Lagrangian. The expansion coefficients, $a_k$, alternate in sign and possess a leading $a_k\sim(2k)!$ growth as $k\to\infty$. The Heisenberg-Euler Lagrangians also exhibit a known logarithmic leading-order behavior in the strong-field regime. The initial step in our procedure is to map the expansion coefficients, $a_k$, to the positive-power moments of some positive function $\rho(x)$, $a_k=\mu_{2k}=\int_{0}^{\infty}x^{2k} \rho(x) \mathrm{d}x$ so that we sum the divergent alternating PT expansion formally to a generalized Stieltjes integral $S(\beta)$,
\begin{equation}\label{biyoko}
\frac{S(\beta)}{\beta} = \int_{0}^{\infty}\frac{\rho(x)}{1+\beta x^2}\mathrm{d}x\sim \sum_{k=0}^{\infty}\mu_{2k} (-\beta)^{k}, \qquad \beta\to 0.
\end{equation}

The main impediment  with this approach is that evaluating $S(\beta)$ in the opposite asymptotic regime by expanding $(1+\beta x^2)^{-1}$ in powers of $1/(\beta x^2)$ and a term-wise integration is carried out leads to an expansion with terms that diverge individually,
\begin{align}\label{bijt}
S(\beta) = \beta\int_{0}^{\infty}\frac{\rho(x)}{1+\beta x^2}\mathrm{d}x \rightarrow \sum_{k=0}^{\infty} 
\mu _{-(2k+2)}\frac{(-1)^k}{\beta^k} ,
\end{align}
where the negative-power moments, $\mu_{-(2k+2)} =\int_{0}^{\infty}\rho(x)/x^{2k+2}\mathrm{d}x, k=0,1\dots $ are divergent integrals for sufficiently large $k$. For this reason, the evaluation of the Stieltjes integral after the reconstruction of $\rho(x)$ from the positive-power moments, $\mu_k$, is typically carried out numerically \cite{bender1987maximum, mead1984maximum}. We circumvent this problem by interpreting the divergent negative-power moments, $\mu_{-(2k+2)}$, as Hadamards's finite part and use its novel complex contour integral representation to arrive at a convergent expansion in inverse powers of $\beta$ plus a correction term that enables us to incorporate the logarithmic leading-order behavior in the strong-field regime.    

Finally in section \ref{conclusion}, we summarize our results and provide a possible direction in which to improve the efficacy of the summation and extrapolation prescription we presented here. In particular, we give a concise discussion on how to apply the prescription to extrapolate the divergent expansion \eqref{mopy} to the case when $\beta<0$, such as when the background is purely electric and the Heisenberg-Euler Lagrangian becomes complex-valued. The PT expansion becomes non-alternating in this case and our prescription also recovers the non-pertubative imaginary part of the exact result from the real coefficients of the non-alternating weak-field perturbation expansion. We emphasize that as opposed to a resurgent extrapolation to complex values of the parameter $\beta$ \cite{costin2019resurgent, costin2}, the extrapolation procedure we propose here for the weak-field PT expansion of the Heisenberg-Euler Lagrangian to the strong field regime is carried out along the real line so that the field values are physical. In which case, the extrapolation to the strong field limit in the direction of the positive real-axis, $\beta\to +\infty$, corresponds to an extrapolation to large magnetic fields while the negative direction, $\beta\to-\infty$, corresponds to an extrapolation to the strong electric field limit. 

\section{Hadamard's Finite Part}\label{sorat}
In this section, we give a concise discussion on the computation of Hadamard's finite part of the divergent integrals with a pole singularity at the origin,
 \begin{equation}\label{miv}
    \int_{0}^{a} \frac{f(x)}{x^{m}} \mathrm{d}x,\qquad m=1,2,\dots, a>0,
\end{equation}
where $f(x)$ is analytic at the origin and $f(0)\neq 0$. Consistent with the expedient canonical definition \cite{monegato2009definitions},  the Hadamard's finite part may be formulated more rigorously as a complex contour integral \cite{galapon2,galapon2016cauchy} or alternatively as a regularized limit at the poles of Mellin transform integrals \cite{regularizedlimit}. The equivalence of these dual representations is central to the results we 
 present here and their applications. 
 
\subsection{Canonical Representation}\label{hfp}
The canonical representation of the finite part of the divergent integral \eqref{miv} is obtained by introducing an arbitrarily small cut-off parameter $\epsilon$, $0<\epsilon<a$, to replace the offending non-integrable origin. The resulting convergent integral is grouped into two sets of terms
\begin{equation}\label{definitepart}
\int_{\epsilon}^{a} \frac{f(x)}{x^{m}} \mathrm{d}x = C_{\epsilon}+D_{\epsilon},
\end{equation}
where $C_{\epsilon}$ is the group of terms that possesses a finite limit as $\epsilon\rightarrow 0$, while $D_{\epsilon}$ diverges in the same limit and consists of terms in inverse powers of $\epsilon$ and $\ln\epsilon$. The finite part of the divergent integral is then defined uniquely by dropping the diverging group of terms $D_{\epsilon}$, leaving only the limit of $C_{\epsilon}$ and assigning the limit as the value of the divergent integral,
\begin{align}\label{finitepart}
\bbint{0}{a} \frac{f(x)}{x^{m}}\mathrm{d} x = \lim_{\epsilon\rightarrow 0} C_{\epsilon} .
\end{align}
The upper limit $a$ can be also be taken to infinity provided  $f(x)x^{-m}$ is integrable at infinity, in which case,
\begin{align}\label{pisik}
    \bbint{0}{\infty}\frac{f(x)}{x^{m}}\mathrm{d}x =     \lim_{a\to\infty}\bbint{0}{a}\frac{f(x)}{x^{m}}\mathrm{d}x.
\end{align}
\subsection{Contour Integral Representation}
A rigorous formulation of the Hadamard's finite part as a complex contour integral may also be derived from the following form of equation \eqref{definitepart} 
\begin{align}\label{form2}
\bbint{0}{a} \frac{f(x)}{x^{m}} \mathrm{d}x = \lim_{\epsilon\rightarrow 0} \left[\int_{\epsilon}^{a} \frac{f(x)}{x^{m}} \mathrm{d}x - D_{\epsilon}\right].
\end{align}
This contour integral representation is given in the following lemma.  The full derivation is given as a proof of Theorem 2.2 in \cite{galapon2}.
\begin{lemma}\label{prop1}
Let the complex extension, $f(z)$, of $f(x)$,  be analytic in the interval $[0,a]$. If $f(0)\neq 0$, then 
	\begin{equation}\label{result1}
	\bbint{0}{a}\frac{f(x)}{x^{m}}\mathrm{d}x=\frac{1}{2\pi i}\int_{\mathrm{C}} \frac{f(z)}{z^{m}} \left(\log z-\pi i\right)\mathrm{d}z, \;\; m = 1, 2 \dots 
	\end{equation}
where $\log z$ is the complex logarithm whose branch cut is the positive real axis and $\mathrm{C}$ is the contour straddling the branch cut of $\log z$ starting from $a$ and ending at $a$ itself, as depicted in figure \ref{tear2}. The contour $\mathrm{C}$ does not enclose any pole of $f(z)$.
\end{lemma}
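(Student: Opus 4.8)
The plan is to evaluate the contour integral on the right of \eqref{result1} directly by collapsing $\mathrm{C}$ onto the branch cut and then matching the result against the canonical definition \eqref{finitepart}. First I would decompose $\mathrm{C}$ into three pieces: a rectilinear segment running from $a$ inward to a small radius $\epsilon$ just above the positive real axis, a circular arc $\mathrm{C}_\epsilon$ of radius $\epsilon$ about the origin, and a second segment running from $\epsilon$ back out to $a$ just below the axis. Since $f(z)z^{-m}(\log z-\pi i)$ is analytic in the cut plane away from the origin and $\mathrm{C}$ encloses no pole of $f$, Cauchy's theorem guarantees that the value of the integral is unchanged under this deformation and is independent of $\epsilon$; it may therefore be computed in the limit $\epsilon\to 0^{+}$.

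On the two rectilinear pieces I would insert the boundary values of the logarithm, namely $\log z=\ln x$ on the upper edge ($\arg z=0$) and $\log z=\ln x+2\pi i$ on the lower edge ($\arg z=2\pi$). The $\ln x$ parts telescope between the two edges and the constant jump $2\pi i$ cancels the prefactor $1/(2\pi i)$, so that the two segments combine to give exactly $\int_{\epsilon}^{a} f(x)x^{-m}\,\mathrm{d}x$. This is precisely the truncated integral in \eqref{definitepart}, which I would then split as $C_{\epsilon}+D_{\epsilon}$ in the usual way, isolating a degree-$(m-1)$ Taylor polynomial of $f$ whose integration produces the diverging inverse powers of $\epsilon$ and the $\ln\epsilon$ term, with the analytic remainder $R(x)=O(x^{m})$ contributing a convergent integral.

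The crux of the argument is the circular arc. Writing $z=\epsilon e^{i\theta}$ with $\theta$ running from $0$ to $2\pi$ and inserting the Taylor expansion $f(z)=\sum_{n}c_{n}z^{n}$ with $c_{n}=f^{(n)}(0)/n!$, I would reduce the arc to a sum of elementary angular integrals of $e^{ik\theta}$ and $\theta e^{ik\theta}$ with $k=n-m+1$. For $k\neq 0$ the term independent of $\theta$ integrates to zero and only the linear term survives, producing a power $\epsilon^{k}$; for $k=0$, i.e. $n=m-1$, the integral yields $2\pi\ln\epsilon$, and it is exactly here that the shift $-\pi i$ is indispensable: it cancels the imaginary contribution $2\pi^{2}i$ arising from $\int_{0}^{2\pi}\theta\,\mathrm{d}\theta$, leaving a purely logarithmic real term $c_{m-1}\ln\epsilon$. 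Collecting the contributions, the arc evaluates to $\sum_{n\neq m-1}\frac{c_{n}}{n-m+1}\epsilon^{\,n-m+1}+c_{m-1}\ln\epsilon$.

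Finally I would add the two contributions and take $\epsilon\to0$. The negative powers $\epsilon^{\,n-m+1}$ for $0\le n\le m-2$ and the logarithm $c_{m-1}\ln\epsilon$ delivered by the arc are precisely the negatives of the diverging terms $D_{\epsilon}$ produced by $\int_{\epsilon}^{a}$, so they cancel term by term, while the terms with $n\ge m$ vanish in the limit. What remains is exactly $\lim_{\epsilon\to0}C_{\epsilon}$, which by \eqref{finitepart} is the finite part, establishing \eqref{result1}. The main obstacle I anticipate is purely technical bookkeeping: justifying the term-by-term integration on the arc via the Taylor polynomial-plus-remainder splitting, and tracking the angular integrals carefully enough to exhibit both the exact term-by-term cancellation of $D_{\epsilon}$ and the vanishing of the spurious imaginary constant engineered by the $-\pi i$ shift.
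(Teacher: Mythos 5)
Your proof is correct and takes essentially the approach the paper relies on: the paper does not reprove Lemma \ref{prop1} but defers to the proof of Theorem 2.2 in \cite{galapon2}, which likewise collapses $\mathrm{C}$ onto the two edges of the cut plus a small circle, uses the boundary values of $\log z$ so the straight segments reproduce $\int_{\epsilon}^{a}f(x)x^{-m}\,\mathrm{d}x$, evaluates the arc via the Taylor expansion of $f$ at the origin, and matches the surviving terms with $\lim_{\epsilon\to 0}C_{\epsilon}$ in \eqref{finitepart}. Your bookkeeping is accurate on the two delicate points: the arc's divergent terms cancel $D_{\epsilon}$ term by term, and the $-\pi i$ shift is precisely what kills the spurious constant $\pi i\,f^{(m-1)}(0)/(m-1)!$ produced by $\int_{0}^{2\pi}\theta\,\mathrm{d}\theta$ in the $k=0$ mode.
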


\subsection{Regularized Limit Representation} \label{bilok}
Another equivalent formulation of the Hadamard's finite part is facilitated by the concept of the regularized limit introduced in \cite{regularizedlimit}. 
We quote here some of the relevant results.

Let $w(z)$ be a function of the complex variable $z$ that is analytic at some domain $D \subseteq \mathbb{C}$ and  $z_0$ be an isolated singularity of $w(z)$ in $D$, then we can define the deleted neighborhood $\delta_{z_0} = D(r,z_0)\setminus z_0$  where $D(r,z_0)$ is an open disk of radius $r$ centered at $z_0$ such that the function $w(z)$ admits the Laurent series expansion 
 \begin{align}\label{gisit}
     w(z) = \sum_{n=-\infty}^{\infty}a_n(z-z_0)^n,
 \end{align}
 where the coefficients $a_n$ are given by 
 \begin{align}
     a_n = \frac{1}{2\pi i} \oint_{ |z-z_0| = r'} \frac{w(z)}{(z-z_0)^{n+1}}\mathrm{d}z
 \end{align}
 for any $r' < r$.  The radius $r$ is bounded by the distance of the nearest singularity of $w(z)$ from $z_0$. The regularized limit of the function $w(z)$ at $z=z_0$ is defined as follows.
 
\begin{definition}
Let $z_0$ be an interior point in the domain D of $w(z)$. The regularized limit of $w(z)$ as $z\to z_0$, denoted by
\begin{equation}
        \lim^{\times}_{z\to z_0} w(z),
\end{equation}
is the coefficient $a_0$ in the Laurent series expansion \eqref{gisit} of $w(z)$ in a deleted neighborhood of $z_0$.
\end{definition}
In the case when the function $w(z)$ can be rationalized, that is written in the form $w(z) = h(z)/g(z)$ where $h(z)$
and $g(z)$ are both analytic at $z_0$ and $z_0$ is a simple zero of $g(z)$ while $h(z_0)\neq 0$, then the regularized limit is computed as,
\begin{align}\label{gilik}
    \lim_{z\to z_0}^{\times} \frac{h(z)}{g(z)} = \frac{h'(z_0)}{g'(z_0)} - \frac{h(z_0) g''(z_0))}{2(g'(z_0))^2}.
\end{align}
For the case when $g''(z_0) = 0$, this result reduces to a form similar to the L'Hospital's rule, 
\begin{align}\label{hospital}
    \lim_{z\to z_0}^{\times} \frac{h(z)}{g(z)} = \lim_{z\to z_0}\frac{h'(z)}{g'(z)}.
\end{align}
The derivation of this result is given in the proof of Corollary 3.1 in \cite{regularizedlimit}. 

The finite part of the divergent integral \eqref{miv} 
can be extracted from the 
analytic continuation, $\mathcal{M}^*[f(x); s]$, to the whole complex $s$ plane of the Mellin transform, 
\begin{align}\label{mopt}
    \mathcal{M} \left[f(x) ;\,\,s\right] = \int_{0}^{\infty}x^{s-1}f(x)\mathrm{d}x,
\end{align}
provided there is a non-trivial strip of analyticity of the Mellin integral. The case when the upper limit is a finite $a$ follows from equation \eqref{mopt} by considering $f(x) = g(x)\Theta(a-x)$, where $\Theta(x)$ is the Heaviside step function.  Furthermore, if $f(x)$ is analytic at $x=0$, the Mellin transform $\mathcal{M}[f(x); s]$ has at most simple poles along the real line. 

For a positive integer $m$, if $s=1-m$ is a pole of $\mathcal{M}^*[f(x); s]$, then the finite part integral \eqref{miv} is given by the regularized limit of $\mathcal{M}^*[f(x); s]$ at $s=1-m$,
\begin{align}\label{sigaa}
    \bbint{0}{\infty} \frac{f(x)}{x^{m}} \, \mathrm{d}x = \lim^{\times}_{s\to 1-m} \mathcal{M}^*\ \left[f(x) ;\,\,s\right].
\end{align}
In most situations, the analytic continuation can be written in rational form, $\mathcal{M}^*[f(x); s]=h(s)/g(s)$, where $h(1-m)\neq 0$ and $g(1-m)=0$. In this case, it may be possible to choose the rationalization such that $g''(1-m)=0$ so that when $s=1-m$ is a simple pole of $\mathcal{M}^*[f(x); s]=h(s)/g(s)$ the calculation of the regularized limit at $s=1-m$ reduces to equation \eqref{hospital}.

Both the canonical and regularized limit representations are used primarily for computing the value of the Hadamard's finite part explicitly. We demonstrate the foregoing discussion by computing the finite part integral \eqref{sigaa} for the case $f(x) = e^{-bx}$ for $b>0$. The finite part integral is the regularized limit at the poles of the analytic continuation of the following Mellin transform integral \cite[p 20]{brychkov2018handbook}, 
\begin{align}
    \mathcal{M} \left[e^{-b x} ;\,\,s\right] &= \int_{0}^{\infty}x^{s-1}e^{-bx}\mathrm{d}x\\\label{cipt}
    &= b^{-s}\Gamma(s), \qquad \mathrm{Re} \,b, \mathrm{Re}\,s > 0.
\end{align}
The analytic continuation of the Mellin transform to the whole complex plane is simply the right-hand side of equation \eqref{cipt},
\begin{align}\label{sif}
    \mathcal{M}^*\left[e^{-b x} ;\,\,s\right] = b^{-s}\Gamma(s).
\end{align}
\begin{figure}
    \centering
    \includegraphics[scale=0.28]{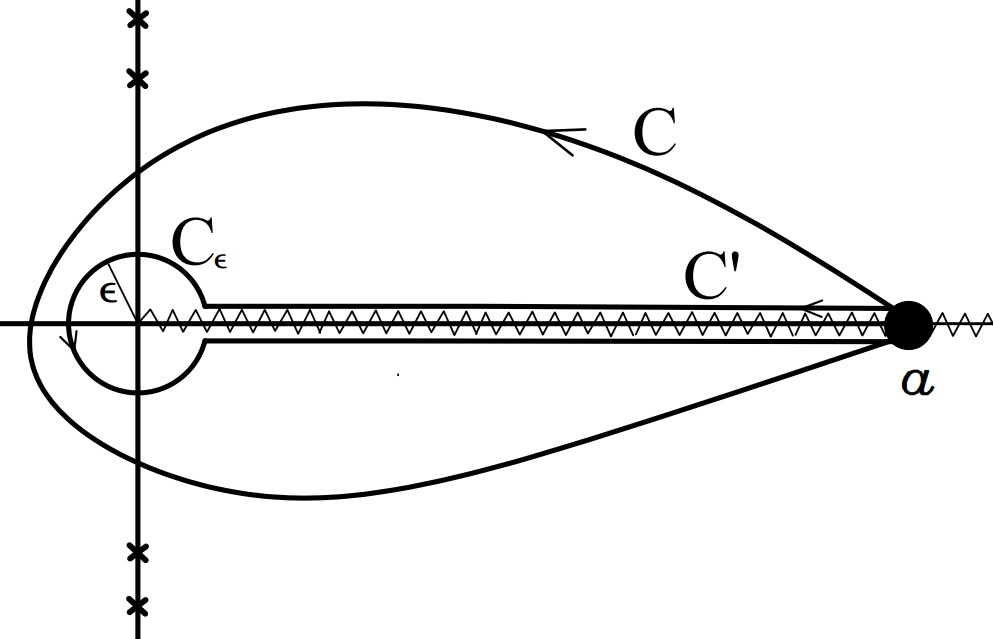}
	\caption{The contour $\mathrm{C}$ used in the representation \eqref{result1} for the Hadamard's finite part. The contour $\mathrm{C}$ excludes any of the poles of $f(z)$. The upper limit $a$ can be infinite. $\epsilon$ is a small positive parameter.}
	\label{tear2}
\end{figure}
Hence, the finite part integral is
\begin{align}\label{miwo}
        \bbint{0}{\infty}\,\frac{e^{-bx}}{x^{m}}\,\mathrm{d}x 
        =\lim^{\times}_{s\to 1-m} b^{-s}\Gamma(s)
        = \lim^{\times}_{s\to 1-m} \frac{\pi b^{-s}}{\sin(\pi s)\Gamma(1-s)}.
\end{align}
where we used the reflection formula $\Gamma(s) = \pi / \sin(\pi s)\Gamma(1-s)$. We then let $g(s) = \sin(\pi s)$ and $h(s) = \pi b^{-s}/ \Gamma(1-s)$ so that from equation \eqref{hospital}, the regularized limit evaluates to
\begin{align}\label{poriy}
    \bbint{0}{\infty}\,\frac{e^{-bx}}{x^{m}}\,\mathrm{d}x = \lim^{\times}_{s\to 1-m} \frac{h(s)}{g(s)} = \lim_{s\to 1-m} \frac{h'(s)}{g'(s)}.
\end{align}
Hence substituting $g'(s) = \pi\cos(\pi s)$ and 
\begin{align}
    h'(s) = -\frac{\pi b^{-s}}{\Gamma(1-s)}(\log b-\psi(1-s)),
\end{align}
where $\psi\left(z\right)$ is the digamma function and computing the limit in \eqref{poriy}, we obtain the result
\begin{equation}\label{ighi}
    \bbint{0}{\infty}\,\frac{e^{-bx}}{x^{m}}\,\mathrm{d}x = \frac{\left(-1\right)^{m}\,b^{m-1}}{\left(m-1\right)!}\left(\ln\,b-\psi\left(m\right)\right),\qquad m =1,2\dots,
\end{equation}
for real and positive $b$. A different rationalization of equation \eqref{sif} can be done but this will lead to the same result.

The result \eqref{ighi} is consistent with the result in \cite[eq 3.15]{galapon2} obtained after a lengthy calculation using the canonical definition \eqref{finitepart}. In this particular case, the computation of the finite part integral \eqref{ighi} as a regularized limit is more convenient. In general however, especially for cases when the Mellin transform $\mathcal{M}[f(x); s]$ does not exist for some function $f(x)$, the Hadamard's finite part integral can always be obtained from canonical definition \eqref{finitepart}.

\section{Finite-Part Integration}\label{meth}
An important application of the contour integral representation \eqref{result1} is a technique known as finite-part integration \cite{galapon2, tica2018finite,tica2019finite,villanueva2021finite} and relies on the equivalence among the three different representations discussed above. We apply it here to give a rigorous derivation of the closed-form for the Heisenberg-Euler Lagrangian from  which exact values can be computed. This is a special case of the more general result \cite[eq 64]{galapon3}.  In the next section, we will use these values to check the results of a summation and extrapolation procedure applied on the divergent weak-field expansions of the Heisenberg-Euler Lagrangian.  

\subsection{The Heisenberg-Euler Lagrangian} The Heisenberg-Euler Lagrangian $\mathcal{L}$ is a nonlinear correction to the Maxwell Lagrangian resulting from the interaction of a vacuum of charged particles of mass $m$ and spin $s$with an external electromagnetic field \cite{schwinger, dunne, dunne_harris,  heisenber_euler, dittrich, walter, walter2}. In the one-loop order and for the case of constant fields, it depends on the invariant quantity $\beta = e^2 (B^2-E^2)/m^4$, where $e$ is the electron charge. For a purely magnetic background, it is given for spin-$0$ and spin-$\frac{1}{2}$ particles as,  
\begin{align}\label{oin}
    \mathcal{L}_{0}(\beta) = \frac{m^4}{16\pi^2}f_{0}(\beta) \qquad\text{and}\qquad \mathcal{L}_{\frac{1}{2}}(\beta) = \frac{m^4}{8\pi^2}f_{\frac{1}{2}}(\beta) 
\end{align}
where,
\[
       f_s(\beta) = \int_{0}^{\infty}\frac{\mathrm{d}\tau}{\tau^3}e^{-\tau}\chi_s{(\sqrt{\beta}\, \tau)};\qquad\chi_s(x) = 
\begin{dcases}
    x\,\mathrm{csch}(x) - 1+\frac{x^2}{6}, &  s = 0 \\
    1+\frac{x^2}{3}-x\coth(x), &    s = \frac{1}{2} . 
\end{dcases}
\]
These representations are in Heaviside-Lorentz system with natural units so that $h = c =1$ and the fine structure constant reads $\alpha = e^2/4\pi$.

For $x\to 0$, the function $\chi_s(x)$ can be expanded as,
\begin{align}\label{miyp}
    \chi_s(x) = \sum_{k=2}^{\infty} c^{(s)}_k x^{2k},
\end{align}
where the expansion coefficients are given by 
\begin{align}
    c_k^{(0)} = \frac{2-2^{2k}}{(2k)!}B_{2k},\quad c_k^{(1/2)} =-\frac{2^{2k}}{(2k)!} B_{2k}
\end{align}
and $B_{2k}$ are the Bernoulli numbers. Substituting this expansion for $\chi_s(x)$ into equation \eqref{oin} and integrating term-by-term, we obtain the divergent PT expansion for the function $f_s(\beta)$: 
\begin{equation}\label{gagah}
    f_{s}(\beta) = \sum_{k=2}^{\infty} a_k^{(s)} (-\beta)^{k},\qquad a_{k}^{(s)} = (-1)^{k}(2k-3)! c_k^{(s)}, \qquad \beta\to 0.
\end{equation}

We consider exclusively the case of a purely magnetic background where $f_s(\beta)$ in the integral representation \eqref{oin} is real and the expansion \eqref{gagah} is alternating in sign. The Heisenberg-Euler Lagrangian is complex-valued for the case of a purely electric background,  $\beta = - \kappa, \kappa = e^2 E^2/m^4 $. The imaginary part signals the important phenomenon of Schwinger effect which predicts the instability of the QED vacuum \cite{dunne_shw}. The imaginary part is invisible up to any finite order of the PT expansion \eqref{gagah} which in this case becomes non-alternating in sign. The procedure we present in the following subsections works equally well for this case. We will give an outline of the procedure when applied to this case in section \ref{conclusion} while a detailed treatment is given in \cite{nonalternating}. 

While $f_s(\beta)$ in equation \eqref{oin} is well-defined in both spin cases, it is expressed as a finite sum of divergent integrals \eqref{miv} with pole singularities at the origin. A formal procedure to arrive at a closed-form is carried out in \cite{dittrich, walter, walter2} by employing $n$-dimensional regularization scheme to make sense of the divergent integrals. Here, we will make use of the equivalent representations of Hadamard's finite part to derive these results more rigorously. 

\subsection{Spin-0}
For the Heisenberg-Euler Lagrangian \eqref{oin} in the case of the spin-0 particles, consider the contour integral,
\begin{align}
    \int_{\mathrm{C}} \frac{g(z)}{z^3}\log z\,\mathrm{d}z,\qquad 
    g(z) = e^{-z} \left(\sqrt{\beta}\,z\, \mathrm{csch} \left(\sqrt{\beta} z\right) - 1 +\frac{\beta z^2}{6}\right),
\end{align}
where the contour $\mathrm{C}$ is given in figure \ref{tear2}. None of the poles of the integrand along the imaginary axis are in the interior of $\mathrm{C}$. We deform $\mathrm{C}$ to an equivalent contour $\mathrm{C'}$ so that 
\begin{align}\label{nimo}
    \int_{0}^{a} \frac{g(x)}{x^3} \mathrm{d}x = \frac{1}{2\pi i}\int_{\mathrm{C}}\frac{g(z)}{z^3} \log z \mathrm{d}z,
\end{align}
where the integral along the circular contour about the origin, $\mathrm{C}_\epsilon$, vanishes as $\epsilon\to 0$. We then add a zero term,
\begin{align}
    -\frac{\pi i}{2\pi i}\int_{\mathrm{C}} \frac{g(z)}{z^3} \mathrm{d}z = -\pi i \sum \mathrm{Res}\,\,\left[\frac{g(z)}{z^3}\right] = 0,
\end{align}
to the right-hand side of equation \eqref{nimo} so that,
\begin{align}
    \int_{0}^{a} \frac{g(x)}{x^3} \mathrm{d}x = \frac{1}{2\pi i}\int_{\mathrm{C}}\frac{g(z)}{z^3}\left(\log z - \pi i \right)\mathrm{d}z
    = \bbint{0}{a}  \frac{g(x)}{x^3}\mathrm{d}x,
\end{align}
where we used the contour integral representation \eqref{result1} of the Hadamard's finite part integral. Hence, taking the limit $a\to\infty$, the exact integral representation \eqref{oin} evaluates to 
\begin{align}\nonumber
     f_0(\beta) &= \int_{0}^{\infty}\frac{e^{-\tau}}{\tau^3}\left[\sqrt{\beta}\,\tau \,\mathrm{csch}{\left(\sqrt{\beta}\tau\right)} - 1 + \frac{\beta\tau^2}{6}\right] \mathrm{d}\tau\\\label{gidak}
     &= \sqrt{\beta}\bbint{0}{\infty}\frac{e^{-\tau} \mathrm{csch}{\left(\sqrt{\beta}\tau\right)}}{\tau^2}\mathrm{d}\tau -\bbint{0}{\infty}\frac{e^{-\tau}}{\tau^3}\mathrm{d}\tau + \frac{\beta}{6}\bbint{0}{\infty} \frac{e^{-\tau}}{\tau}\mathrm{d}\tau.
\end{align}
In effect, the result \eqref{gidak} follows directly from a term-by-term integration followed by the immediate regularization of each divergent term as Hadamard's finite part. In section \ref{bigaj}, we will demonstrate that this procedure will generally result to missing terms especially when one performs a term-by-term integration involving an infinite number of divergent integrals. 

The first finite part integral in  right-hand side of equation \eqref{gidak} can be computed using the following Mellin transform integral \cite[p 34, eq 7]{brychkov2018handbook},
\begin{align}\nonumber
    \mathcal{M}\left[e^{-a\tau} \mathrm{csch}{\left(b\,\tau\right)};\,\,s\right] &= \int_{0}^{\infty}\tau^{s-1} e^{-a\tau} \mathrm{csch}{\left(b\,\tau\right)} \mathrm{d}\tau\\\label{igik}
    & = \frac{2^{1-s}}{b^{s}}\Gamma(s)\,\zeta\left(s,\frac{a+b}{2\,b}\right), \qquad \mathrm{Re}\,a > -|\mathrm{Re}\,b|; \mathrm{Re}\,s > 1,
\end{align}
 where $\,\zeta\left(z,\nu\right)$ is the Hurwitz zeta function. The finite part integral is the regularized limit at the simple pole $s=-1$ of the analytic continuation of the Mellin transform to the whole complex plane. The analytic continuation of the Mellin transform is simply given  by the right-hand side of equation \eqref{igik},
 \begin{align}
\mathcal{M}^{\ast}\left[e^{-a\tau} \mathrm{csch}{\left(b\,\tau\right)};\,\,s\right] = \frac{2^{1-s}}{b^{s}}\Gamma(s)\,\zeta\left(s,\frac{a+b}{2\,b}\right).
 \end{align}
 So that
\begin{align}
    \bbint{0}{\infty}\frac{e^{-\tau} \mathrm{csch}{\left(\sqrt{\beta}\tau\right)}}{\tau^2}\mathrm{d}\tau 
    &= \lim_{s\to-1}^{\times} \mathcal{M}^{\ast}\left[e^{-\tau} \mathrm{csch}{\left(\sqrt{\beta}\,\tau\right)};\,\,s\right] \\
    &= \lim_{s\to-1}^{\times} \frac{2^{1-s}}{\beta^{s/2}} \frac{\pi}{\sin(\pi s)\Gamma(1-s)}\,\zeta\left(s,\frac{1+\sqrt{\beta}}{2\sqrt{\beta}}\right).
\end{align}
where we've made use of the reflection formula for $\Gamma(s)$.
We then rationalize the analytic continuation of the Mellin transform by writing,
\begin{align}
\mathcal{M}^{\ast}\left[e^{-\tau} \mathrm{csch}{\left(\sqrt{\beta}\,\tau\right)};\,\,s\right] = \frac{h(s)}{g(s)},
\end{align}
where
\begin{equation}\label{igli}
    h(s) = \frac{2^{1-s}\pi}{\beta^{s/2}\Gamma(1-s)} \zeta\left(s,\frac{1+\sqrt{\beta}}{2\sqrt{\beta}}\right), \qquad g(s) = \sin(\pi s).
\end{equation}
From the formula \eqref{hospital}, we compute the regularized limit as
\begin{align}\label{miytp}
      \bbint{0}{\infty}\frac{e^{-\tau} \mathrm{csch}{\left(\sqrt{\beta}\tau\right)}}{\tau^2}\mathrm{d}\tau = \lim_{s\to-1}^{\times} \frac{h(s)}{g(s)} = \lim_{s\to-1} \frac{h'(s)}{g'(s)}.
\end{align}
Computing the derivatives, $g'(s) = \pi\cos(\pi s)$ and 
\begin{align}\nonumber
    h'(s) = \frac{2^{1-s}\pi\beta^{-s/2}}{\Gamma(1-s)}& \left[\left(\psi(1-s)-\ln\beta-\ln 2\right)\,\zeta\left(s,\frac{1+\sqrt{\beta}}{2\sqrt{\beta}}\right)\right.\\
    + &\left.\,\zeta^{(1,0)}\left(s,\frac{1+\sqrt{\beta}}{2\sqrt{\beta}}\right)  \right],
\end{align}
where $\zeta^{(1,0)}(z,\nu)$ is the derivative of the Hurwitz zeta function with respect to the first argument $z$. Hence, the finite part integral \eqref{miytp} evaluates to 
\begin{align}\nonumber
          \bbint{0}{\infty}\frac{e^{-\tau} \mathrm{csch}{\left(\sqrt{\beta}\tau\right)}}{\tau^2}\mathrm{d}\tau = 2\sqrt{\beta} &\left[\left(\ln\beta+\ln 4+2\gamma-2\right)\,\zeta\left(-1,\frac{1+\sqrt{\beta}}{2\sqrt{\beta}}\right) \right.\\
          &- \left. 2\,\zeta^{(1,0)}\left(-1,\frac{1+\sqrt{\beta}}{2\sqrt{\beta}}\right)\right],
\end{align}
where $\gamma = - \psi(1)$ is the Euler-Mascheroni constant. 

The other finite part integrals in the right-hand side of equation \eqref{gidak} are special cases of equation \eqref{ighi}. Hence, the integral representation \eqref{gidak} for the Heisenberg-Euler Lagrangian in the case of spin-0 particles takes the following closed-form,
\begin{equation}\label{impin}
f_0(\beta) = \frac{\beta\ln\beta}{12} - \frac{\ln\beta}{4}+\beta\left(\frac{\ln 4}{12}-\frac{1}{6}\right) -\frac{\ln 4}{4}-\frac{1}{4}- 4\beta \zeta^{(1,0)}\left(-1,\frac{1+\sqrt{\beta}}{2\sqrt{\beta}}\right).
\end{equation}
where we made use of the relation \cite{NIST},
\begin{equation}\label{hilaga}
   \zeta(-1, \nu) = -\frac{1}{12}+\frac{\nu}{2}-\frac{\nu^2}{2},
\end{equation}
to simplify the result.
The result \eqref{impin} is consistent with that given in \cite{dunne,walter,walter2} using a different approach based on $\zeta$-function regularization. The result given in \cite[eq 3.26]{dittrich} obtained formally using $n$-dimensional regularization has an error and is rectified in a subsequent work \cite{walter}. 

Similarly for the case of spin-$\frac{1}{2}$ particles,
\begin{align}\label{kity}
    f_\frac{1}{2}(\beta) = \bbint{0}{\infty}\frac{e^{-\tau}}{\tau^{3}}\mathrm{d}\tau+\frac{\beta}{3}\bbint{0}{\infty}\frac{e^{-\tau}}{\tau}\mathrm{d}\tau-\sqrt{\beta} \bbint{0}{\infty}\frac{e^{-\tau}\coth{\left(\sqrt{\beta}\tau\right)}}{\tau^2}\mathrm{d}\tau.
\end{align}
The first two finite part integrals are again special cases of the result \eqref{ighi} while the finite part integral in the third term is the regularized limit at a pole of the analytic continuation of the Mellin transform \cite[p 34, eq 11]{brychkov2018handbook} to the whole complex plane,
\begin{align}\nonumber
    \bbint{0}{\infty}&\frac{e^{-\tau}\coth{\left(\sqrt{\beta}\tau\right)}}{\tau^2}\mathrm{d}\tau = \sqrt{\beta}\left(\ln 16 + 2\ln\beta \right)\zeta\left(-1,\frac{1}{2\sqrt{\beta}}\right)\\\label{imaw}
    &+ (\gamma-1)\left(4\sqrt{\beta}\zeta\left(-1, \frac{1}{2\sqrt{\beta}}\right)-1\right)
    - 4\sqrt{\beta}\zeta^{(1,0)}\left(-1, \frac{1}{2\sqrt{\beta}}\right).
\end{align}
Substituting the finite part \eqref{imaw} to equation \eqref{kity} and making use of the relation \eqref{hilaga}, we arrive at the closed-form, 
\begin{align}\nonumber
        f_{\frac{1}{2}}(\beta) = 4\beta\,\zeta^{(1,0)}\left(-1, \frac{1}{2\sqrt{\beta}}\right)& + \frac{1}{4} - \frac{\beta}{3}\\\label{kilat}
        &-\beta\left(\ln 16 + 2\ln\beta \right)\,\left(-\frac{1}{12}+\frac{1}{4\sqrt{\beta}}-\frac{1}{8\beta}\right).
\end{align}
This result also coincides with the result reported in \cite{dittrich, walter} obtained from the integral representation \eqref{oin} using $n$-dimensional regularization; and in \cite{dunne,walter2} using $\zeta$-function regularization. 

\subsection{Self-dual electromagnetic background.} In the case of a self-dual (SD) electromagnetic background \cite{sara,dunne, honda, self_dual1, self_dual2} satisfying 
\begin{equation}
    F_{\mu\nu} = \widetilde{ F}_{\mu\nu} \equiv \frac{1}{2}\epsilon_{\mu\nu\rho\lambda}F^{\rho\lambda},
\end{equation}
the one-loop Heisenberg-Euler Lagrangian describing a charged scalar particle is given by 
\begin{equation}\label{muska}
    \mathcal{L}_0(\beta) = \frac{e^2 \mathcal{F}^2}{4\pi^2}f_{\mathrm{SD}}(\beta), \,\,\, f_{\mathrm{SD}}(\beta) = \frac{1}{4}\int_{0}^{\infty}\frac{\mathrm{d}\tau}{\tau}\left(\frac{1}{\sinh^2{\tau}}-\frac{1}{\tau^2}+\frac{1}{3}\right)e^{-{2\tau}/{\sqrt{\beta}}},
\end{equation}
where $\mathcal{F}^2 = \frac{1}{4}F_{\mu\nu}F^{\mu\nu}$ and the natural dimensionless parameter $\beta = \left(2e\mathcal{F}/m^2\right)^2$. The same integral, $f_{\mathrm{SD}}(\beta)$, also appears in the case of spinor QED.
This integral is also relevant in the non-perturbative formulation for the free energy in the $c=1$ string theory at self-dual radius \cite{sara}. We also limit to the case when $\mathcal{F}$ is real or magnetic-like. Our treatment here to arrive at a closed-form  applies equally well for an imaginary or electric-like background, $\mathcal{F} = -i \overline{\mathcal{F}}$. The details are also given in \cite{nonalternating}. 

As in the previous cases, finite-part integration applied on the exact integral formulation \eqref{muska} yields,  
\begin{align}\label{futi}
   f_{\mathrm{SD}}(\beta)  = \frac{1}{4}\bbint{0}{\infty}\frac{e^{-{2\tau}/{\sqrt{\beta}}}}{\tau \sinh^2{\tau}}\mathrm{d}\tau -\frac{1}{4}\bbint{0}{\infty}\frac{e^{-{2\tau}/{\sqrt{\beta}}}}{ {\tau^3}}\mathrm{d}\tau +\frac{1}{12}\bbint{0}{\infty}\frac{e^{-{2\tau}/{\sqrt{\beta}}}}{ {\tau}}\mathrm{d}\tau.
\end{align}
The finite part integral in the first term in the right-hand side of equation \eqref{futi} is computed as a regularized limit at a pole of the Mellin transform integral \cite[p 29, eq 10]{brychkov2018handbook}.
 Proceeding similarly as in the previous cases, we obtain
\begin{align}\nonumber
    \frac{1}{4}\bbint{0}{\infty}\frac{e^{-{2\tau}/{\sqrt{\beta}}}}{\tau \sinh^2{\tau}}\mathrm{d}\tau =  \left(-\gamma-\ln 2\right)& \left[\zeta\left(-1,\frac{1}{\sqrt{\beta}}\right) -\frac{1}{\sqrt{\beta}}\zeta\left(0,\frac{1}{\sqrt{\beta}}\right)\right] \\
     &+ \zeta^{(1,0)}\left(-1,\frac{1}{\sqrt{\beta}}\right) -  \frac{1}{\sqrt{\beta}}\,\zeta^{(1,0)}\left(0,\frac{1}{\sqrt{\beta}}\right).
\end{align}
The other  finite part integrals occurring in \eqref{futi} are again given by equation \eqref{ighi}. 
We then make use of $\zeta(0, \nu) = \frac{1}{2}-\nu$ along with the relation \eqref{hilaga} to write,
\begin{align}\label{nopita}
    f_{\mathrm{SD}}(\beta) = \zeta^{(1,0)}\left(-1,\frac{1}{\sqrt{\beta}}\right) - \frac{1}{\sqrt{\beta}}\,\zeta^{(1,0)}\left(0,\frac{1}{\sqrt{\beta}}\right) 
    -(\ln{\beta})\left(\frac{1}{4\beta}-\frac{1}{24}\right)-\frac{3}{4\beta}.
\end{align}
A result given in \cite[eq 2.5 p 5]{self_dual1} writes $f_{\mathrm{SD}}(\beta)$ in terms of the Barnes G function. 

\begin{table}
	\begin{tabular}{ l l l l l l }
		\hline
		$d$   &  $\beta=10^{-2}$ & $\beta=0.1$&  $\beta=0.2$\\ 
		\hline
		1   & $\textcolor{blue}{1.932}143(10^{-6})$ & $\textcolor{blue}{1.8}214(10^{-4})$ & $6.7937(10^{-4})$ \\
            2   & $\textcolor{blue}{1.932394}841(10^{-6})$  & $\textcolor{blue}{1.8}466(10^{-4})$ & $7.1968(10^{-4})$\\
		5   & $\textcolor{blue}{1.9323847}854(10^{-6})$ & $\textcolor{blue}{1.83}50(10^{-4})$ &$6.235({10^{-4}})$\\
		9   & $\textcolor{blue}{1.932384796}847(10^{-6})$ & $\textcolor{blue}1.6194(10^{-4})$ 
                &$-4.978(10^2)$ \\
		20  & $\textcolor{blue}{1.9323847969}843(10^{-6})$ & $8.42618(10^{5})$ & $3.636(10^{12})$\\
		50  & $3.3995123(10^4)$ &  &\\
          \hline
		Exact & $ 1.93238479692775525(10^{-6})$ & $ 1.83994677220(10^{-4})$ & $7.0356826048(10^{-4})$ \\
		\hline
  
	\end{tabular}
 
	\caption{Convergence of the partial sums of the perturbative expansion \eqref{gagah} for the integral $f_0(\beta)$ in equation \eqref{oin} for the Heisenberg-Euler Lagrangian in the case of spin-0 particles. The exact result is computed from the closed-form \eqref{impin} of the integral representation \eqref{oin}.}
 \label{bigak}
\end{table} 

\section{Summation and extrapolation of the weak-field expansion for the Heisenberg-Euler Lagrangian}\label{bigaj}
We now discuss the main result of this paper which is the application of the method of finite-part integration in the important problem of summation and extrapolation to the strong-field regime of the divergent weak-field expansion given in equation \eqref{gagah} for the Heisenberg-Euler Lagrangian. 
The divergence of the PT expansion \eqref{gagah} is evident in the leading growth rate of the coefficients, $a_k^{(s)}\sim(2k)!$ as $k\to\infty$ \cite{dunne}, and from the results presented in table \ref{bigak} of using partial sums of this expansion to compute $f_0(\beta)$ for some values of the parameter $\beta$. In addition,
the function $f_s(\beta)$ possesses the leading-order behavior in the strong-field regime, $\beta\to\infty$ \cite[eqs 1.63 and 1.54]{dunne},
\begin{align}\label{mirt}
    f_{0}(\beta)\sim\frac{\beta\ln\beta}{12} +\frac{\ln 2}{6}\beta + \dots\,\,\, \text{and}\,\,\, f_{\frac{1}{2}}(\beta) \sim \frac{\beta\ln \beta }{6} + \frac{\ln 2}{3}\beta + \dots 
\end{align}
 
On the basis of this information, we sum the divergent PT series \eqref{gagah}
by mapping the first $d+1$ expansion coefficients $a_{k+2}^{(s)}$ to the positive-power moments $\mu_{2k}^{(s)}$ of some positive function $\rho_s(x)$,
\begin{align}\label{gigh}
    a_{k+2}^{(s)} =\mu_{2k}^{(s)} = \int_{0}^{\infty} x^{2k}\rho_s(x)\mathrm{d}x, \qquad k = 0,1,\dots, d.
\end{align}
Substituting this to the expansion \eqref{gagah}, the function $f_s(\beta)$ is summed formally as,
\begin{align}\nonumber
    f_s(\beta) &= \sum_{k=2}^{\infty} a_k^{(s)} (-\beta)^{k} = \beta^{2}\sum_{k=0}^{\infty} a_{k+2}^{(s)} (-\beta)^{k} = \beta^{2} \sum_{k=0}^{\infty} \int_{0}^{\infty} x^{2k} \rho_s(x) (-\beta)^{k} \mathrm{d}x\\
    &= \beta^{2}\int_{0}^{\infty}\rho_s(x) \left(\sum_{k=0}^{\infty} (-\beta x^2)^k\right)\mathrm{d}x = \beta^2 \int_{0}^{\infty} \frac{\rho_s(x)}{1+\beta x^{2}} \mathrm{d}x.
\end{align}
which is in terms of the generalized Stieltjes integral,
\begin{align}
    S(\beta) =  \int_{0}^{\infty} \frac{\rho_s(x)}{1/\beta + x^{2}} \mathrm{d}x.
\end{align}

The next step is to evaluate $S(\beta)$ in the strong-field regime in a manner that allows one to incorporate the known leading-order strong-field behavior \eqref{mirt}. The relevant expansion for the generalized Stieltjes integral is obtained using the method of finite-part integration. This is given in the following theorem.

\begin{theorem} \label{lemma0}
Let the complex extension, $\rho(z)$, of the real-valued function $\rho(x)$ for real $x$, be entire, then the generalized Stieltjes integral, $S(\beta)$, admits the following exact convergent expansion
        \begin{align}\label{som}
       S(\beta) = \int_{0}^{\infty}\frac{\rho(x)}{1/\beta + x^{2}}\mathrm{d}x = \sum_{k=0}^{\infty} 
        \frac{(-1)^k}{\beta^k} \mu_{-(2k+2)} +\Delta(\beta),
        \end{align}
        where the term $\Delta(\beta)$ is given by
\begin{align}\label{gibad}
    \Delta(\beta) = \frac{\pi \sqrt{\beta}}{4}\left(\rho\left(\frac{i}{\sqrt{\beta}}\right)+\rho\left(\frac{-i}{\sqrt{\beta}}\right)\right)
    +\frac{\sqrt{\beta}\ln\beta}{4 i}\left(\rho\left(\frac{i}{\sqrt{\beta}}\right)-\rho\left(\frac{-i}{\sqrt{\beta}}\right)\right),
\end{align}
and $\mu_{-(2k+2)}$ are the divergent negative-power moments of $\rho(x)$ are interpreted as the Hadamard's finite part integral,
\begin{equation}\label{pigil}
    \mu_{-(2k+2)} = \bbint{0}{\infty}\frac{\rho(x)}{x^{2k + 2}}\mathrm{d}x.
\end{equation}
\end{theorem}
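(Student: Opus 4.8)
The plan is to realise $S(\beta)$ as a single contour integral over the curve $\mathrm{C}$ of figure~\ref{tear2}, expand the Stieltjes kernel as a finite geometric series carrying an exact remainder, recognise each term as a finite-part moment through Lemma~\ref{prop1}, and then extract $\Delta(\beta)$ as the residues that are released when the remainder contour is rotated across the poles of the kernel.

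First I would record a contour representation of the (convergent) integral itself. Because $\rho(z)$ is entire, the function $f(z)=\rho(z)/(1/\beta+z^{2})$ is analytic on $[0,\infty)$ — its only singularities are the simple poles at $z=\pm i/\sqrt{\beta}$, which lie off the positive real axis — so the branch-cut–collapsing argument underlying Lemma~\ref{prop1} yields $S(\beta)=\frac{1}{2\pi i}\int_{\mathrm C}\frac{\rho(z)}{1/\beta+z^{2}}\log z\,\mathrm dz$. The extra $-\pi i$ of \eqref{result1} is harmless here: the kernel carries no branch cut, so collapsing $\mathrm C$ onto the axis makes its two sides cancel and $\int_{\mathrm C}\rho(z)/(1/\beta+z^{2})\,\mathrm dz=0$; hence I may equivalently write $S(\beta)=\frac{1}{2\pi i}\int_{\mathrm C}\frac{\rho(z)}{1/\beta+z^{2}}\,(\log z-\pi i)\,\mathrm dz$, which is the form adapted to the finite part \eqref{pigil}.

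Next I would insert the exact algebraic identity
\[ \frac{1}{1/\beta+z^{2}}=\sum_{k=0}^{N-1}\frac{(-1)^{k}}{\beta^{k}z^{2k+2}}+\frac{(-1)^{N}}{\beta^{N}z^{2N}}\,\frac{1}{1/\beta+z^{2}} \]
into this representation. Applying Lemma~\ref{prop1} to $\rho(z)/z^{2k+2}$, each of the first $N$ terms integrates to $(-1)^{k}\beta^{-k}\mu_{-(2k+2)}$, so that $S(\beta)=\sum_{k=0}^{N-1}(-1)^{k}\beta^{-k}\mu_{-(2k+2)}+R_{N}(\beta)$ with
\[ R_{N}(\beta)=\frac{(-1)^{N}}{\beta^{N}}\,\frac{1}{2\pi i}\int_{\mathrm C}\frac{\rho(z)\,(\log z-\pi i)}{z^{2N}\,(1/\beta+z^{2})}\,\mathrm dz. \]

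The heart of the proof — and the step I expect to be the main obstacle — is the evaluation of $\lim_{N\to\infty}R_{N}(\beta)$. I would deform $\mathrm C$ off the positive real axis toward the imaginary axis, which is legitimate since $\rho$ is entire and, for the densities at hand, decays (or stays bounded) throughout the right half-plane, so the connecting arcs at infinity vanish even though $\rho$, being entire of positive order, grows in the left half-plane. This rotation sweeps the contour across the two simple poles $z=\pm i/\sqrt{\beta}$, and the decisive simplification is that $(\pm i/\sqrt{\beta})^{2N}=(-1)^{N}\beta^{-N}$ cancels the prefactor $(-1)^{N}\beta^{-N}$ exactly, so the residue contribution is \emph{independent of $N$}. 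Feeding in the branch values $\log(i/\sqrt{\beta})=-\tfrac12\ln\beta+\tfrac{i\pi}{2}$ and $\log(-i/\sqrt{\beta})=-\tfrac12\ln\beta+\tfrac{3i\pi}{2}$ (with the convention $\arg\in(0,2\pi)$ of Lemma~\ref{prop1}) then reproduces precisely the two pieces of $\Delta(\beta)$ in \eqref{gibad}, with the orientation fixing the sign. What remains is to show that the non-residue part of $R_{N}$ tends to zero and, with it, that the moment series converges: on the deformed contour one has $|\beta z^{2}|>1$ away from the origin, so the factor $\beta^{-N}z^{-2N}$ is geometrically small there, while the finite-part structure governs the neighbourhood of the origin, where the same factor grows. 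Establishing this decay uniformly is the delicate quantitative estimate on which the whole theorem rests, and it is exactly the point at which the hypothesis that $\rho$ be entire (rather than merely smooth) does the essential work.
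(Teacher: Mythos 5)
Your overall mechanism is the right one and is close to the paper's: the correction $\Delta(\beta)$ is the residue contribution at $z=\pm i/\sqrt{\beta}$ that is released when the contour is moved into the region $|\beta z^{2}|>1$ where the expansion of the kernel in powers of $1/(\beta z^{2})$ behaves, and your observation that $(\pm i/\sqrt{\beta})^{2N}=(-1)^{N}\beta^{-N}$ cancels the prefactor, so that the residues are independent of $N$, is exactly why one and the same $\Delta(\beta)$ appears at every order; your branch values reproduce \eqref{gibad} correctly. The difference in route is that you keep a finite geometric sum with an exact remainder $R_{N}$ and must then prove $R_{N}\to\Delta(\beta)$, whereas the paper starts from the contour of figure \ref{tear} that already encloses the poles (so the residues are collected at the outset), expands the kernel in the full infinite series, which converges uniformly on that contour, and integrates term by term, each term collapsing to a finite-part integral by Lemma \ref{prop1}.

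The gap is at the step you yourself flag, and as set up it cannot be closed. Two problems with the deformation ``toward the imaginary axis'': (i) it requires control of $\rho$ on arcs sweeping the right half-plane, and ``for the densities at hand'' is not a hypothesis of the theorem --- only entirety of $\rho$ is assumed; (ii) more seriously, after the rotation the contour still passes near the origin (the small circle $\mathrm{C}_{\epsilon}$ and the initial portions of the rotated rays with $|z|<1/\sqrt{\beta}$ remain), and there $|\beta^{-N}z^{-2N}|$ grows without bound in $N$, so the non-residue part of $R_{N}$ is not pointwise small on that contour; the near-origin and far contributions cancel only in the aggregate, which a direct estimate will not see. The repair is to deform radially instead: push the inner circle out to a radius $r$ with $1/\sqrt{\beta}<r<a$ while keeping the two segments along the cut from $r$ to $a$. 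This crosses exactly the two poles (yielding $\Delta(\beta)$ as you computed), requires nothing of $\rho$ beyond analyticity on $|z|\le r$, and places the entire contour in $|z|\ge r$, where the leftover integral is $O\bigl((\beta r^{2})^{-N}\bigr)$ for fixed $a$ and hence vanishes; the limit $a\to\infty$ is then taken as in \eqref{pisik}. This is in effect the ``uniformity condition'' $|z|>1/\sqrt{\beta}$ imposed in \eqref{gabt}; with that replacement your argument closes and becomes a legitimate finite-remainder variant of the paper's proof.
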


\begin{proof}
   Deform the contour $\mathrm{C}$ to $\mathrm{C'}$ as shown in figure \ref{tear} and perform the following contour integration, 
\begin{equation}\label{tint}
\int_{\mathrm{C'}} \frac{\rho(z)}{1/\beta + z^{2}} \log z\,\mathrm{d}z = (2\pi i) \int_{0}^{a}\frac{\rho(x)}{1/\beta+ x^{2}}\mathrm{d}x + (2\pi i)\sum \mathrm{Res}\left[\frac{\rho(z)\log z}{1/\beta+z^{2}}\right]
\end{equation}
where the integral along the circular loop, $\mathrm{C_\epsilon}$, vanishes as $\epsilon\to 0$. This yields an expression for the original integral along the real line,
\begin{align}\label{mit}
    \int_{0}^{a}\frac{\rho(x)}{1/\beta+ x^{2}}\mathrm{d}x = \frac{1}{2\pi i} \int_{\mathrm{C}} \frac{\rho(z)}{1/\beta + z^{2}} \log z\,\mathrm{d}z  - \sum \mathrm{Res}\left[\frac{\rho(z)\log z}{1/\beta+z^{2}}\right].
\end{align}
We then add a zero to the first term of the right-hand side of equation \eqref{mit}, by adding and subtracting the term 
\begin{align}
    \frac{\pi i}{2\pi i}\int_{\mathrm{C}} \frac{\rho(z)}{1/\beta + z^{2}}\mathrm{d}z = \pi i\sum \mathrm{Res}\left[\frac{\rho(z)}{1/\beta+z^{2}}\right].
\end{align}
So that the first term in the right-hand side of equation \eqref{mit} becomes, 
\begin{align}\nonumber\label{musa}
     \frac{1}{2\pi i} \int_{\mathrm{C}} \frac{\rho(z)}{1/\beta + z^{2}} \log z\,\mathrm{d}z  
     = \frac{1}{2\pi i} \int_{\mathrm{C}} \frac{\rho(z)}{1/\beta + z^{2}} (\log z- \pi i)\,\mathrm{d}z  \\
     + \pi i\sum \mathrm{Res}\left[\frac{\rho(z)}{1/\beta+z^{2}}\right].
\end{align}

 \begin{figure}
    \centering
    \includegraphics[scale=0.21]{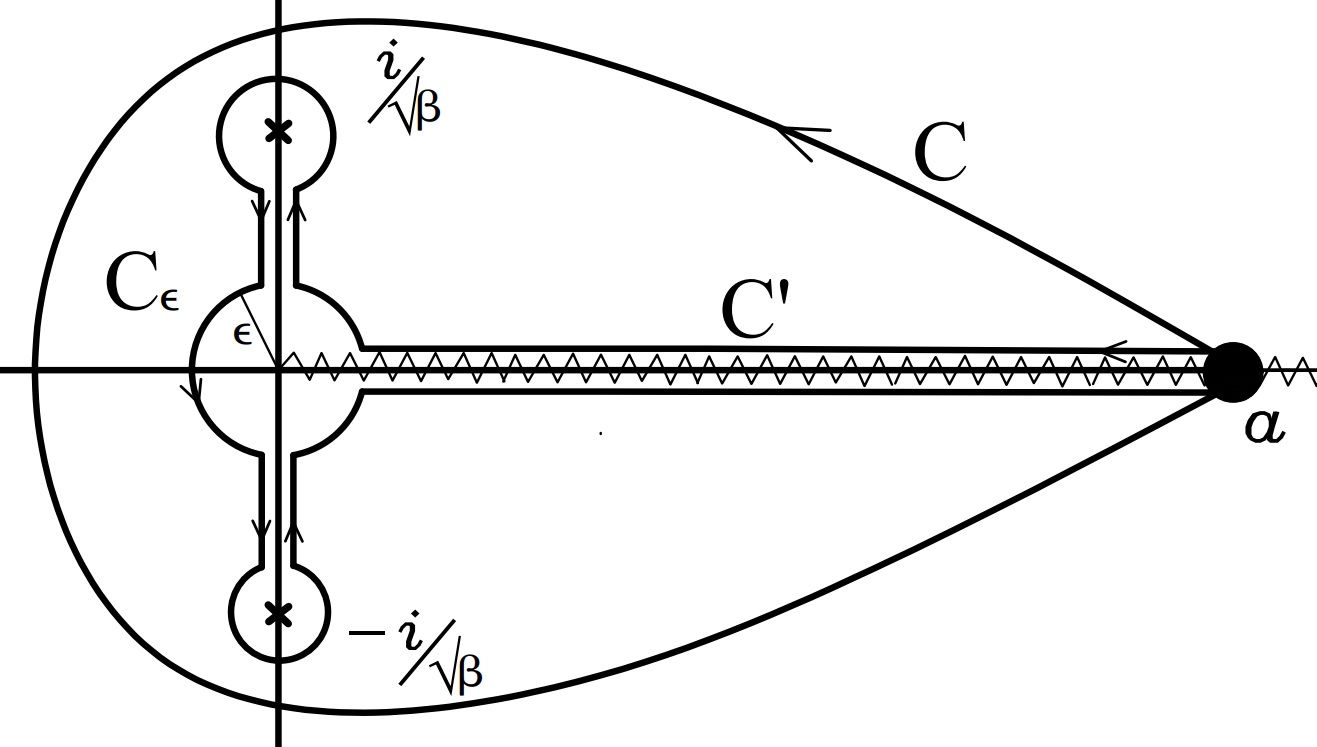}
	\caption{The contour of integration. The upper limit $a$ can be infinite. The poles of the generalized Stieltjes integral at $z=\pm i/\sqrt{\beta}$ lie inside the contour $\mathrm{C}$. $\epsilon$ is a small positive parameter.}
	\label{tear}
\end{figure}

 In the first term of the right-hand of equation \eqref{musa}, we expand $\left(1/\beta + z^{2}\right)^{-1}$ in powers of $1/(\beta z^2)$ and perform a term-by-term integration so that equation \eqref{musa} becomes
\begin{align}\nonumber\label{gabt}
         \frac{1}{2\pi i} \int_{\mathrm{C}} \frac{\rho(z)}{1/\beta + z^{2}} \log z\,\mathrm{d}z =& \sum_{k=0}^{\infty} 
        \frac{(-1)^k}{\beta^k}\frac{1}{2\pi i } \int_{\mathrm{C}}\frac{\rho(z)(\log z-\pi i)}{z^{2k+2}}\mathrm{d}z \\
        &+\pi i\sum \mathrm{Res}\left[\frac{\rho(z)}{1/\beta+z^{2}}\right],\qquad |z| > \frac{1}{\sqrt{\beta}}\\  \label{kinb}
        =&\sum_{k=0}^{\infty} 
        \frac{(-1)^k}{\beta^k} \bbint{0}{a}\frac{\rho(x)}{x^{2k + 2}}\mathrm{d}x + \pi i\sum \mathrm{Res}\left[\frac{\rho(z)}{1/\beta+z^{2}}\right],
\end{align}
where in equation \eqref{kinb}, we used the contour integral representation \eqref{result1} of the Hadamard's finite part integral.   

Substituting equation \eqref{kinb} for the first term of the right-hand side of equation \eqref{mit} and taking the limit as $a\to\infty$,
\begin{align}\label{alad}
\lim _{a\to\infty}\int_{0}^{a}\frac{\rho(x)}{1/\beta + x^{2}}\mathrm{d}x = \sum_{k=0}^{\infty} 
\frac{(-1)^k}{\beta^k} \lim_{a\to\infty}\bbint{0}{a}\frac{\rho(x)}{x^{2k + 2}}\mathrm{d}x +\Delta(\beta).
\end{align}
The term $\Delta(\beta)$ is given by
\begin{align}\label{igty}\nonumber
    \Delta(\beta) &= \sum \mathrm{Res}\left[\frac{\rho(z)(\log z - \pi i)}{1/\beta+z^{2}}; z =\pm \frac{i}{\sqrt{\beta}}\right]
\end{align}
which evaluates to equation \eqref{gibad}. Hence, we obtain the result \eqref{som}.

\end{proof}

\begin{table}
	\begin{tabular}{c lllllll}
		\hline
		Moments & $\beta = 10^{7}$ & $\beta = 10^{12}$ &   $\beta = 10^{13}$ & $\beta = 10^{18}$ \\ 
		\hline

		100  & $\textcolor{blue}{1.07}87(10^{7})$ & $\textcolor{blue}{2.0}424(10^{12})$  & $\textcolor{blue}{2.2}3516(10^{13})$ & $\textcolor{blue}{3.1}991(10^{18})$  \\
  
 	500 & $\textcolor{blue}{1.07}54(10^{7})$ & $\textcolor{blue}{2.03}27(10^{12})$ & $\textcolor{blue}{2.22}416(10^{13})$ & $\textcolor{blue}{3.18}16(10^{18})$  \\ 


		1000 & $\textcolor{blue}{1.076}3(10^{7})$  & $\textcolor{blue}{2.03}47(10^{12})$  & $\textcolor{blue}{2.22}648(10^{13})$  & $\textcolor{blue}{3.18}51(10^{18})$   \\ 

            1500 &  $\textcolor{blue}{1.07}72(10^{7})$  & $\textcolor{blue}{2.036}7(10^{12})$  & $\textcolor{blue}{2.228}60(10^{13})$ & $\textcolor{blue}{3.18}83(10^{18})$\\ 

            2000 & $\textcolor{blue}{1.07}71(10^{7})$ & $\textcolor{blue}{2.036}4(10^{12})$ & $\textcolor{blue}{2.228}33(10^{13})$ & $\textcolor{blue}{3.187}9(10^{18})$  \\
      
            2500 & $\textcolor{blue}{1.0769}4(10^{7})$ & $\textcolor{blue}{2.0361}5(10^{12})$ & $\textcolor{blue}{2.2280}3(10^{13})$ & $\textcolor{blue}{3.1874}5(10^{18})$ \\
        \hline
        $P^{999}_{1000} (\beta)$ & $1.5148(10^{4})$  & $1.5151(10^{9})$ & $1.5151(10^{12})$  & $1.5151(10^{15})$  \\
       
        $P^{49}_{50} (\beta)$ & $1.0723(10^{6})$  & $1.0723(10^{11})$ & $1.0723(10^{12})$  & $1.0723(10^{17})$ \\
        \hline
        $\delta_{499} ( \beta)$ & $8.5224(10^{6})$ & $1.0137(10^{15})$ & $1.0130(10^{17})$ & $1.0129(10^{27})$  \\
        
        $\delta_{100} (\beta)$ & $1.1943(10^{7})$ & $6.1881(10^{16})$ & $6.1880(10^{18})$ & $6.1880(10^{28})$  \\
        \hline
        Exact & $1.07693(10^{7})$ & $2.03613(10^{12})$ & $2.22801(10^{13})$ & $3.18742(10^{18})$ \\
        \hline

	\end{tabular}
 
	\begin{tabular}{c lllllll}

		Moments & $\beta = 1$ & $\beta = 4$ &  $\beta = 10^2$ & $\beta = 10^3$ & $\beta = 10^4$ &  \\ 
		\hline

		100 & $\textcolor{blue}{0.0139}583$ & $\textcolor{blue}{0.149}42$  & $\textcolor{blue}{17}.2803$ & $\textcolor{blue}{32}8.27$ & $\textcolor{blue}{507}1.2$ \\

 	500 & $\textcolor{blue}{0.0139688}5101 $ & $\textcolor{blue}{0.149783}78$ & $\textcolor{blue}{17.35}40$ & $\textcolor{blue}{329}.10$ & $\textcolor{blue}{507}3.5$  \\ 

		800 & $\textcolor{blue}{0.01396884}6277$ & $\textcolor{blue}{0.149783}10$ & $\textcolor{blue}{17.35}31$ & $\textcolor{blue}{329}.08$ & $\textcolor{blue}{507}3.0$\\ 

		1000 & $\textcolor{blue}{0.013968847}5625$ & $\textcolor{blue}{0.149783}54$  & $\textcolor{blue}{17.35}52$ & $\textcolor{blue}{329}.19$ & $\textcolor{blue}{507}5.9$ \\ 

            1500 & $\textcolor{blue}{0.0139688479}565$ & $\textcolor{blue}{0.1497837}32$ & $\textcolor{blue}{17.356}5$  & $\textcolor{blue}{329.2}68$ & $\textcolor{blue}{507}8.1$ \\ 

            2000 & $\textcolor{blue}{0.0139688479}511$ & $\textcolor{blue}{0.14978372}6$ & $\textcolor{blue}{17.356}4$ & $\textcolor{blue}{329.2}60$ & $\textcolor{blue}{5077}.9$\\

        \hline
        $P^{999}_{1000} (\beta)$ & $0.0139688428836$ & $0.149678652$ & $0.63315$ & $13.0395$ & $149.04$ \\
        
         $P^{49}_{50} (\beta)$ & $0.0139668760758$ & $0.147740086$ & $9.88642$ & $106.322$ & $1071.4$ \\
         \hline
        $\delta_{100} ( \beta)$ & $0.0139688479485$ & $0.149783722$ & $17.3563$ & $329.338$ & $4983.8$ \\
        \hline
        Exact & $0.0139688479485$ & $0.149783722$ & $17.3563$ & $329.251$ & $5077.6$ \\
        \hline
	\end{tabular}

	\begin{tabular}{  c l l }
		Moments & $\beta = 0.1 $ & $\beta = 0.2 $    \\ 
		\hline
		50  & \textcolor{blue}{1.83 9}24$(10^{-4})$  &  \textcolor{blue}{7.0}2 337$(10^{-4})$  \\
  
		100 &  \textcolor{blue}{1.83 99}4$(10^{-4})$ & \textcolor{blue}{7.03 5}36$(10^{-4})$  \\
		    
        500 & \textcolor{blue}{1.83 994 677 2}27$(10^{-4})$  & \textcolor{blue}{7.03 568 26}1$(10^{-4})$  \\ 
        
        800 & \textcolor{blue}{1.83 994 677 220} 167$(10^{-4})$ & \textcolor{blue}{7.03 568 260 4}12$(10^{-4})$  \\ 
        
        1000 & \textcolor{blue}{1.83 994 677 220 3}47$(10^{-4})$ & \textcolor{blue}{7.03 568 260 4}74$(10^{-4})$    \\ 
        
        1500 & \textcolor{blue}{1.83 994 677 220 367} 084$(10^{-4})$ & \textcolor{blue}{7.03 568 260 484 }22$(10^{-4})$     \\
        
        2000 & \textcolor{blue}{1.83 994 677 220 367 06}4$(10^{-4})$ &  \textcolor{blue}{7.03 568 260 484 1}9$(10^{-4})$    \\
    
        \hline
        $P^{99}_{100} (\beta)$ & 1.83 994 677 220 361 577$(10^{-4})$ & 7.03 568 260 367 885$(10^{-4})$  \\
        \hline
         $\delta_{25} ( \beta)$ & 1.83 994 677 220 367 065$(10^{-4})$  & 7.03 568 260 484 163$(10^{-4})$ \\
        \hline
        Exact & 1.83 994 677 220 367 060$(10^{-4})$  & 7.03 568 260 484 187$(10^{-4})$ \\
        \hline
	\end{tabular}

	\begin{tabular}{  c l l l l l   }
		Moments & $ \beta = 10^{-2}$   \\ 
		\hline
	50 &  \textcolor{blue}{1.93 238 47}3$(10^{-6})$  &  &    \\
  
	100 & \textcolor{blue}{1.93 238 479 6}85$(10^{-6})$   \\

        500 & \textcolor{blue}{1.93 238 479 692 775 524 98}3 $(10^{-6})$ \\ 

        800 & \textcolor{blue}{1.93 238 479 692 775 524 980 51}9 646$(10^{-6})$  \\ 
        1000 & \textcolor{blue}{1.93 238 479 692 775 524 980 520 5}46$(10^{-6})$      \\ 
        
        1500 & \textcolor{blue}{1.93 238 479 692 775 524 980 520 558 841 7}2$(10^{-6})$      \\
        
        2000 & \textcolor{blue}{1.93 238 479 692 775 524 980 520 558 841 71}1$(10^{-6})$   \\
        
        \hline
        $P^{49}_{50} (\beta)$ & 1.93 238 479 692 775 524 980 520 558 841 700 571$(10^{-6})$   \\
        \hline
         $\delta_{35} ( \beta)$ & 1.93 238 479 692 775 524 980 520 558 841 710 583$(10^{-6})$ \\
        \hline
        Exact & 1.93 238 479 692 775 524 980 520 558 841 710 582$(10^{-6})$ \\
        \hline
	\end{tabular}

    \caption{Convergence of the extrapolant \eqref{hilg} constructed from the divergent expansion \eqref{gagah} for the integral $f_{0}(\beta)$ in the representation \eqref{oin}. $P^{N}_{M} (\beta)$ is a Pade approximant constructed from the divergent expansion \eqref{gagah} at 3000-digit precision using $N+M+1$ of the positive-power moments, $\mu^{(0)}_{2k}$, in equation \eqref{gigh}. We computed $\delta_{n} (\beta)$ from \cite{jen} using $n+1$ moments. The exact result is computed from the closed-form \eqref{impin}. }
	\label{hinglab}
\end{table}

\begin{table}
	\begin{tabular}{l l l l l l l}
		\hline
		Moments & $\beta = 10^{7}$ & $\beta = 10^{8}$ & $\beta = 10^{12}$ &  $\beta = 10^{15}$ & $\beta = 10^{17}$  \\ 
		\hline
  
		100 & \textcolor{blue}{1.}403$(10^{7})$  & 1.656$(10^{8})$ & 2.669$(10^{12})$ & 3.429$(10^{15})$ & 3.936$(10^{17})$  \\
  
		200 & \textcolor{blue}{1.}589$(10^{7})$ & 1.882$(10^{8})$  & \textcolor{blue}{3}.054$(10^{12})$ & 3.932$(10^{15})$   & {4}.518$(10^{17})$   & \\

 	500 & \textcolor{blue}{1}.732$(10^{7})$ & \textcolor{blue}{2.}057$(10^{8})$ & \textcolor{blue}{3}.360$(10^{12})$ & \textcolor{blue}{4.}337$(10^{15})$ & {4}.988$(10^{17})$ &   \\ 

		800 & \textcolor{blue}{1.}782$(10^{7})$ & \textcolor{blue}{2.}119$(10^{8})$ & \textcolor{blue}{3}.470$(10^{12})$ & \textcolor{blue}{4.}484$(10^{15})$ & \textcolor{blue}{5}.160$(10^{17})$  &  \\ 

		1000 & \textcolor{blue}{1.}801$(10^{7})$ & \textcolor{blue}{2.}142$(10^{8})$ & \textcolor{blue}{3}.514$(10^{12})$ & \textcolor{blue}{4.}542$(10^{15})$  & \textcolor{blue}{5}.227$(10^{17})$ &  \\ 

            1500 & \textcolor{blue}{1}.829$(10^{7})$ & \textcolor{blue}{2.}179$(10^{8})$ & \textcolor{blue}{3}.579$(10^{12})$ & \textcolor{blue}{4.}630$(10^{15})$ & \textcolor{blue}{5}.330$(10^{17})$ &  \\ 

            2000 &  \textcolor{blue}{1}.845$(10^{7})$ & \textcolor{blue}{2.}200$(10^{8})$ & \textcolor{blue}{3}.618$(10^{12})$ & \textcolor{blue}{4.}682$(10^{15})$ & \textcolor{blue}{5}.392$(10^{17})$ &  \\
		\hline
	    $P^{999}_{1000} (\beta)$ & $2.088(10^6)$  & 2.088$(10^{7})$ & 2.088$(10^{11})$ & 2.088$(10^{14})$  & 2.088$(10^{16})$\\
     
        $P^{49}_{50} (\beta)$ & $1.435(10^6)$  & $1.435(10^{7})$ & $1.435(10^{11})$ & $1.435(10^{14})$  & $1.435(10^{16})$\\
        \hline
        $\delta_{499} (\beta)$ & $1.469(10^{7})$& 1.512$(10^{8})$ & 4.187$(10^{14})$& 4.172$(10^{20})$  & 4.172$(10^{24})$  & \\
       
        $\delta_{100} (\beta)$ & $1.897(10^{7})$& $1.048(10^{9})$ & $9.539(10^{16})$& $9.539(10^{22})$  & $9.539(10^{26})$  & \\
        \hline
        Exact & $1.925(10^{7})$& 2.307$(10^{8})$ & 3.841$(10^{12})$& 4.992$(10^{15})$  & 5.760$(10^{17})$  \\
        \hline
	\end{tabular}

    	\begin{tabular}{l l l l l l l}
		
		Moments & $\beta = 1$ & $\beta = 4$ & $\beta = 10$ &  $\beta = 10^2$ & $\beta = 10^3$  \\ 
		\hline
  
		100 & \textcolor{blue}{1.6}394$(10^{-2})$  & \textcolor{blue}{0.1}7938 & 0.77725 & \textcolor{blue}{2}1.646 & \textcolor{blue}{4}18.1 \\
  
		200 & $\textcolor{blue}{1.645}239(10^{-2})$ & $\textcolor{blue}{0.182}02$  & $0.79805$ & $\textcolor{blue}{2}2.989$& $\textcolor{blue}{4}56.0$\\

 	500 & $\textcolor{blue}{1.6459}611(10^{-2})$ & $\textcolor{blue}{0.182}64$ & $\textcolor{blue}{0.80}451$ & $\textcolor{blue}{23}.644$ & $\textcolor{blue}{4}79.1$  \\ 

		800 & $\textcolor{blue}{1.64598}574(10^{-2})$ & $\textcolor{blue}{0.182}69$ & $\textcolor{blue}{0.805}28$ & $\textcolor{blue}{23}.782$ & $\textcolor{blue}{4}85.4$ \\ 

		1000 & $\textcolor{blue}{1.64598}813(10^{-2})$ & $\textcolor{blue}{0.182}696$  & $\textcolor{blue}{0.805}44$  & $\textcolor{blue}{23.}820$ & $\textcolor{blue}{4}87.5$ \\ 

            1500 & $\textcolor{blue}{1.645989}234(10^{-2})$ & $\textcolor{blue}{0.18270}17$  & $\textcolor{blue}{0.805}57$  & $\textcolor{blue}{23.}865$  & $\textcolor{blue}{49}0.2$ \\ 

            2000 & $\textcolor{blue}{1.6459893}58(10^{-2})$ & $\textcolor{blue}{0.18270}29$ & $\textcolor{blue}{0.8056}1$ &  $\textcolor{blue}{23}.882$ & $\textcolor{blue}{49}1.5$ \\
		\hline
	    $P^{999}_{1000} (\beta)$ & 1.645988828$(10^{-2})$  & 0.1825789  & 0.79637 & 17.589 & 204.9\\
        $P^{49}_{50} (\beta)$ & $1.645771086(10^{-2})$ & $0.1802697$ & $0.74632$ & $13.085$ & 142.1\\
        \hline
        $\delta_{30} (\beta)$ & $1.645989388(10^{-2})$ & $0.1827035$ & $0.80563$ & $23.961$ & $466.7$ & \\
   
        \hline
        Exact & $1.645989388(10^{-2})$ & $0.1827035$ & $0.80564$ & $23.907$ & 494.5 & \\

        \hline
	\end{tabular}

    \caption{Convergence of the extrapolant \eqref{hilg} constructed from the divergent expansion \eqref{gagah} for $f_{\frac{1}{2}}(\beta)$. $P^{N}_{M} (\beta)$ is a Pade approximant constructed from the divergent expansion \eqref{gagah} at 3000-digit precision. We obtained $\delta_{n}( \beta)$ from \cite[eq 4]{jen}.  The exact result is computed from the closed-form \eqref{kilat}.}
	\label{hngaa}
\end{table}

The expansion in equation \eqref{som} can be shown to be absolutely convergent (the proof is similar to that of Theorem 3.4 in \cite{galapon2}). This convergence and the existence of the limit \eqref{pisik} justifies  interchanging the limit operation with the infinite sum in equation \eqref{alad}. Meanwhile,
the second term $\Delta(\beta)$ is missed by merely performing a formal term-by-term integration followed by a regularization of the divergent integrals by Hadamard's finite part. It provides the dominant behavior for the generalized Stieltjes integral as $\beta\to\infty$. This term originates from contributions of the poles $z =\pm i/\sqrt{\beta}$ interior to the contour $\mathrm{C}$ in figure \ref{tear}  which results from the uniformity condition imposed in equation \eqref{gabt}.

By contrast, the finite-part integration \eqref{gidak} of integral representation for the Heisenberg-Euler Lagrangian involves term-by-term integration over a finite number of divergent integrals so that no uniformity condition is imposed and consequently, the contour $\mathrm{C}$ in figure \ref{tear2} excludes any of the poles of the integrand along the imaginary axis.

Hence, we arrive at a convergent expansion for $f_s(\beta)$ from the result \eqref{som} by finite-part integration,
\begin{equation}\label{hilg}
    f_s(\beta) = \sum_{k=0}^{\infty} 
\frac{(-1)^k}{\beta^{k-1}} \mu_{-(2k+2)}^{(s)} + \beta \Delta(\beta).
\end{equation}
In order for the expansion \eqref{hilg} to extrapolate the divergent weak-field expansion \eqref{gagah} to the non-perturbative regime, $\beta\to\infty$ along the real line, we incorporate the leading-order behavior \eqref{mirt} through the term $\Delta(\beta)$ given in equation \eqref{gibad}. To this end, we require the reconstruction of the function $\rho_s(x)$ from the positive-power moments $\mu_{2k}^{(s)}$ in equation \eqref{gigh} to be of the form $\rho_s(x) = x g_s(x)$ where $g_s(0)\neq 0$ and has an entire complex extension $g_s(z)$. This is done by expanding $g_s(x)$ as a generalized Fourier series expansion in terms of the Laguerre polynomials, $L_m(x)$ \cite{four},
\begin{equation}
    g_s(x) = \sum_{m=0}^{\infty} c_m^{(s)} \psi_m(x), \,\,\,\, \psi_m(x) = e^{-x/2}L_m(x).
\end{equation}
The basis functions $\psi_m(x)$ obey the orthonormality relation \cite{ortho},
\begin{equation}
    \int_{0}^{\infty} \left(e^{-x/2}L_m(x)\right)\,\left(e^{-x/2}L_n(x)\right) \mathrm{d}x = \delta_{n,m}
\end{equation}
and the Laguerre polynomials are given by \cite{laguerree}, 
\begin{equation}
    L_m(x) = \sum_{k=0}^{m} \frac{\left(-m\right)_k x^{k}}{(k!)^2} = m!\sum_{k=0}^{m}\frac{(-x)^k } {(k!)^2\,(m-k)!} .
\end{equation}
So that the reconstruction of $g_s(x)$ takes the form,
\begin{equation}\label{mity}
    g_s(x) = e^{-x/2}\sum_{m=0}^{\infty} c_m^{(s)} m! \sum_{k=0}^{m}\frac{(-x)^k}{(k!)^2 (m-k)!}.
\end{equation}
Hence with the form $\rho_s(x) = x g_s(x)$, the second term of the right-hand side of equation \eqref{hilg} can simulate the leading-order behavior \eqref{mirt}.

The first $d+1$ expansion coefficients, $c_m^{(s)}$, in the reconstruction \eqref{mity} are then computed by imposing the moment condition \eqref{gigh}. This results to a system of linear equations,
\begin{align}\label{sugr}
    a_{n+2}^{(s)} = \sum_{m=0}^{d} c_m^{(s)} P(n,m)
\end{align}
where the matrix $P(n,m)$ is given by
\begin{equation}
    P(n,m) = m! 2^{2n+2}\sum_{k=0}^{m}\frac{(-2)^{k}\,(2n+k+1)!}{(k!)^{2} (m-k)!}.
\end{equation}
We solve this system using the LU factorization method and solver provided by the C++ Eigen 3 library \cite{eigenweb}. We also used arbitrary precision data types from \cite{mpfr} and the C++ Boost Multiprecision libraries to represent the PT coefficients $a_{n+2}^{(s)}$ and perform our computations in arbitrary precision. 

We then substitute the reconstruction $\rho_s(x) = x g_s(x)$, where $g_s(x)$ is given by equation \eqref{mity}, to the expansion \eqref{hilg} so that the first term evaluates to
\begin{align}\label{gipoy}
\sum_{k=0}^{\infty}\frac{(-1)^{k}}{\beta^{k-1}} \mu_{-(2k+2)}^{(s)} = \sum_{k=0}^{\lfloor\frac{d-1}{2}\rfloor} \frac{(-1)^{k}}{\beta^{k-1}} \left(I_k + J_k + L_k \right)
     + \sum_{k=\lfloor\frac{d-1}{2}\rfloor+1}^{\infty} \frac{(-1)^{k}}{\beta^{k-1}} M_k,
\end{align}
where $\lfloor x \rfloor$ is the floor function and the coefficients are given by,
\begin{equation}
    I_k = \sum_{m=0}^{2k}c_m^{(s)} m! \sum_{l=0}^{m}\frac{(-1)^{l}}{(l!)^2 (m-l)!}\bbint{0}{\infty}\frac{e^{-x/2}}{x^{2k+1-l}}\mathrm{d}x,
\end{equation}
\begin{equation}
    J_k = \sum_{m=2k+1}^{d}c_m^{(s)} m! \sum_{l=0}^{2k}\frac{(-1)^{l}}{(l!)^2 (m-l)!}\bbint{0}{\infty}\frac{e^{-x/2}}{x^{2k+1-l}}\mathrm{d}x,
\end{equation}
\begin{equation}
    L_k = \sum_{m=2k+1}^{d} c_m^{(s)} m! \sum_{l=2k+1}^{m}\frac{(-1)^{l}\,(l-2k-1)!\,2^{l-2k}}{(l!)^2 (m-l)!},
\end{equation}
and
\begin{equation}
    M_k = \sum_{m=0}^{d}c_m^{(s)} m! \sum_{l=0}^{m}\frac{(-1)^{l}}{(l!)^2 (m-l)!}\bbint{0}{\infty}\frac{e^{-x/2}}{x^{2k+1-l}}\mathrm{d}x.
\end{equation}
The finite part integrals appearing in theses terms are given by equation \eqref{ighi},
\begin{equation}
    \bbint{0}{\infty}\frac{e^{-x/2}}{x^{2k+1-l}}\mathrm{d}x = \frac{(-1)^{1-l}\left(\frac{1}{2}\right)^{2k-l}}{(2k-l)!}\left(\ln\left(\frac{1}{2}\right)-\psi(2k+1-l)\right).
\end{equation}

The convergence of the expansion \eqref{hilg} across a wide range of field strengths is summarized in table \ref{hinglab} for the spin-0 case and in table \ref{hngaa} for spin-$\frac{1}{2}$. The result presented along each row is computed by adding up to $ k = 2d$ terms of the convergent expansion in equation \eqref{gipoy}, where $d+1$ is the number of moments used. As a rule, the working precision in digits at which we carry out the computation equals the number of moments $a_{n+2}^{(s)}$ used as inputs in the system of linear equations \eqref{sugr}. 

In the strong-field limit for both spin cases, our result reproduces the first few digits of the exact value. In the weak to intermediate field strength, the extrapolant \eqref{hilg} exhibits an excellent agreement with the exact values as well as with the Pad\'e approximant, $P^{N}_{M} (\beta)$, and the nonlinear sequence transformation, $\delta_{n}(\beta) $, from \cite[eq 4]{jen}. $P^{N}_{M} (\beta)$ is constructed from the PT expansion \eqref{gagah} at 3000-digit working precision using $N+M+1$ expansion coefficients while $\delta_{n}(\beta)$ uses $n+1$. Both these latter methods perform well in the weak to intermediate regimes but fail to extrapolate the divergent PT expansion \eqref{gagah} well into the $\beta\to\infty$ regime regardless of how many coefficients are used as inputs in their construction. This can be traced to the inability of these methods to incorporate the precise logarithmic strong-field behaviors \eqref{mirt} of the Heisenberg-Euler Lagrangian. The Pad\'e approximant for instance exhibits an integer power leading-order behavior, $P^{N}_{M}(\beta)\sim\beta^{N-M}$ as $\beta\to\infty$.

We also compared the terms in the expansion \eqref{hilg} in the case of spin-0 particles in table \ref{boki}. Both terms are comparable in magnitude and are therefore computationally relevant across all values of the parameter $\beta$ even well into the non-perturbative regime, $\beta\to \infty$, where the second term begins to dominate the first albeit rather slowly due to the logarithmic growth.
Another important feature to note is that the second term contains the sampling of the reconstruction \eqref{mity} of the function $\rho_s(x) = x g_s(x)$. This implies that convergence of the extrapolant \eqref{hilg} in the strong-field regime can be improved or undermined by the point-wise convergence of the reconstruction \eqref{mity}. In the context of the reconstruction scheme chosen here as a generalized Fourier series \eqref{mity}, Gibbs phenomenon, which is the spurious oscillation that arises when a zero or a singularity is not simulated by the reconstruction, is of particular relevance. Here, we are prompted to introduce a zero at $x=0$ in the reconstruction of $\rho_s(x)$ on the basis of the leading-order behavior \eqref{mirt} that we wish to incorporate in the expansion for $f_s(\beta)$.

\subsection{Self-Dual Background}
The integral appearing in the exact formulation \eqref{muska} possesses the following weak-field perturbation expansion \cite{self_dual2}
\begin{align}\label{upgit}
   f_{\mathrm{SD}}(\beta)  
    \sim  \beta \sum_{k=0}^{\infty}-\frac{(-1)^k B_{2k+4}}{(2k+2) (2k+4)} (-\beta)^{k}, \qquad \beta\to 0.
\end{align}
The expansion coefficients alternate in sign and possess a leading-order growth $(2k)!$ as $k\to\infty$. We tabulated the partial sums of this divergent PT expansion in table \ref{biak} for small values of the parameter $\beta$. The asymptotic nature of this expansion manifests with the initial rapid convergence of the sequence of partial sums for small value of $\beta$ before they eventually diverge. Furthermore, the integral $f_{\mathrm{SD}}(\beta)$, possesses the leading-order behavior in the non-perturbative regime \cite{dunne, self_dual1, self_dual2},
\begin{equation}\label{kindat}
    f_{\mathrm{SD}}(\beta)\sim \ln \beta \qquad \beta\to\infty.
\end{equation}

\begin{table}
	\begin{tabular}{ l l l l l l }
		\hline
		$d$   &  $\beta=10^{-2}$ & $\beta=0.1$&  $\beta=0.2$\\ 
		\hline
		1   & $\textcolor{blue}{4.156}746(10^{-5})$ & $\textcolor{blue}{4.0}675(10^{-4})$ & $\textcolor{blue}{7.9}365(10^{-4})$ \\
            2   & $\textcolor{blue}{4.1568}15476(10^{-5})$  & $\textcolor{blue}{4.07}440(10^{-4})$ & $\textcolor{blue}{7.9}921(10^{-4})$\\
		5   & $\textcolor{blue}{4.1568145496}191(10^{-5})$ & $\textcolor{blue}{4.073}5993(10^{-4})$ &$\textcolor{blue}{7.9}792({10^{-4}})$\\
		9   & $\textcolor{blue}{4.156814549649}016(10^{-5})$ & $\textcolor{blue}{4.07361}247(10^{-4})$ 
                &$\textcolor{blue}{7.9}7150(10^{-4})$ \\
		20  & $\textcolor{blue}{4.1568145496490179111}20(10^{-5})$ & $\textcolor{blue}{4}.2517(10^{-4})$ & $4.111(10)$\\
		50  & \textcolor{blue}{4.15681454964901791}2493$(10^{-5})$ &  &\\
          \hline
		Exact & 4.1568145496490179111196$(10^{-5})$ & 4.0736197107$(10^{-4})$ & 7.981190$(10^{-4})$  \\
		\hline
  
	\end{tabular}
 
	\caption{Convergence of the partial sums of the perturbative expansion \eqref{upgit} for the integral $f_{\mathrm{SD}}(\beta)$ Heisenberg-Euler Lagrangian in the self-dual background. The exact result is computed from the closed-form \eqref{nopita}.}
 \label{biak}
\end{table}

\begin{table}
	\begin{tabular}{lllll}
		\hline
		$\beta$  & first term &  second term & $f_0(\beta)$ \\ 
		\hline
		$0.1$ & $2.430021989812885$ & $-2.42983799513566$ & $\underline{1.8399467722036706}4(10^{-4})$ \\
		$1.0$ & $-3.648351466(10^{-2}) $ & $5.045236261518(10^{-2})$ & $\underline{0.0139688479}511$  \\
		$4.0$ & $-1.1382129130$ & $1.287996639$ & $\underline{0.14978372}6$  \\
		$10^{2}$ & $-26.757661$ & $44.114088$ & $\underline{17.356}4$ \\
		$10^{4}$ & $-2665.8699$ & $7743.7243$ & $\underline{5077}.9$  \\
		$10^{7}$ & $-2.6657736(10^6)$ & $1.3436300(10^{7})$ & $\underline{1.07}71(10^{7})$ \\
  	$10^{18}$ & $-2.6657697^(10^{17})$ & $3.4544849^(10^{18})$ & $\underline{3.187}91(10^{18})$ \\ 
		$10^{21}$ & $-2.6657697(10^{20})$ & $4.0302324(10^{21})$& $\underline{3.763}66(10^{21})$ \\ 
		\hline
	\end{tabular}
	\caption{Comparison of terms in the expansion \eqref{hilg} for $f_0(\beta)$ in the Heisenberg-Euler Lagrangian in the case of spin-0 particles at 2000-moment reconstruction. Both terms remain relevant across all parametric regimes even well into the strong field regime, $\beta \to\infty$. The digits underlined coincide with those of the exact result.}
	\label{boki}
\end{table}

 On the basis of this information, we map the PT coefficients in \eqref{upgit} to the positive-power moments, $\mu_{2k}$, of some positive function $\rho(x)$,
\begin{equation}
    -\frac{(-1)^{k} B_{2k+4}}{(2k+2) (2k+4)} = \mu_{2k} =  \int_{0}^{\infty}x^{2k}\rho(x)\mathrm{d}x,\,\,\,k=0,1,\dots
\end{equation}
So that we can sum the PT expansion \eqref{upgit} to a generalized Stieltjes integral \eqref{som}
\begin{align}
    f_{\mathrm{SD}}(\beta) = \beta\int_{0}^{\infty}\frac{\rho(x)}{1+\beta x^2}\mathrm{d}x.
\end{align}
Then by finite-part integration, we obtain an expansion for $f_{\mathrm{SD}}(\beta)$ from \eqref{som},
\begin{align}\label{niop}
    f_{\mathrm{SD}}(\beta) = \sum_{k=0}^{\infty} 
\frac{(-1)^k}{\beta^{k}} \mu_{-(2k+2)} +  \Delta(\beta).
\end{align}
As in the previous cases, we accommodate the leading-order behavior \eqref{kindat} through the second term $\Delta(\beta)$ by requiring the reconstruction of the positive function $\rho(x)$  in the form $\rho(x) = xg(x)$ where $g(x)$ is again given by \eqref{mity} and the reconstruction scheme proceeds as in the previous cases. The first term in the right-hand side of the expansion \eqref{niop}, is computed in exactly the same way as in equation \eqref{gipoy} in the previous examples. 

The convergence of the extrapolant \eqref{niop} of the divergent expansion \eqref{upgit} is summarized in table \ref{hoyaa} for various values of the parameter $\beta$ in the strong-field regime where the expansion reproduces the first few digits of the exact value computed from the closed-form \eqref{nopita}. As in the previous cases, the Pad\'e approximant $P^{N}_{M} (\beta)$ and the nonlinear sequence transformation $\delta_{n} (\beta)$ becomes unreliable as $\beta\to\infty$. 

\begin{table}
	\begin{tabular}{l l l l l l l l l}
		\hline
		Moments &  $\beta = 10^{7}$ & $\beta = 10^{9}$ & $\beta = 10^{13}$ & $\beta = 10^{18}$ &  $\beta = 10^{19}$ & $\beta = 10^{20}$  \\ 
		\hline
  
		100 & 0.40507 &  0.55496 & 0.85492 & \textcolor{blue}{1.}22989 & \textcolor{blue}{1.}30488 & \textcolor{blue}{1.}37987  \\
  
		200 & \textcolor{blue}{0.5}1549 & 0.71213 & \textcolor{blue}{1.}10570 & \textcolor{blue}{1.5}9769 & \textcolor{blue}{1.6}9609   & \textcolor{blue}{1.7}9449 & \\

		800 & \textcolor{blue}{0.50}269 & \textcolor{blue}{0.69}251  &\textcolor{blue}{1.0}7240 & \textcolor{blue}{1.5}4729 & \textcolor{blue}{1.6}4227 & \textcolor{blue}{1.7}3724  &  \\ 

		1000  & \textcolor{blue}{0.50}250 & \textcolor{blue}{0.69}223 &\textcolor{blue}{1.0}7194 & \textcolor{blue}{1.5}4660 & \textcolor{blue}{1.6}4153  & \textcolor{blue}{1.7}3647 &  \\ 

            1500  & \textcolor{blue}{0.50}598 & \textcolor{blue}{0.69}755 &\textcolor{blue}{1.08}095 & \textcolor{blue}{1.56}022& \textcolor{blue}{1.65}608 & \textcolor{blue}{1.75}193 &  \\ 

            2000  & \textcolor{blue}{0.506}86 & \textcolor{blue}{0.698}90 &\textcolor{blue}{1.08}326 & \textcolor{blue}{1.56}374 & \textcolor{blue}{1.65}983 & \textcolor{blue}{1.75}593 &  \\
		\hline
	    $P^{999}_{1000} (\beta)$ & 0.08926 & 0.08926 & 0.08926 & 0.08926 & 0.08926  & 0.08926 \\
        
        $P^{49}_{50} (\beta)$ & 0.06347 & 0.06347 & 0.06347 & 0.06347 & 0.06347  & 0.06347 \\
        \hline
         $\delta_{999} (\beta)$ & 0.46570 & 0.31717 &-2.5$(10^2)$ & -2.5$(10^7)$ & -2.5$(10^8)$ & -2.5$(10^9)$ & \\
           
         $\delta_{100} (\beta)$ & 0.34984 & 5.44784 &$5.15(10^4)$ & $5.15(10^9)$& $5.15(10^{10})$& $5.15(10^{11})$ & \\
        \hline
        Exact  & 0.50632 & 0.69806 & 1.08181 & 1.56152 & 1.65746 & 1.75340 & \\
        \hline

	\end{tabular}

    \caption{Convergence of the extrapolant \eqref{niop} constructed from the divergent expansion \eqref{upgit} for the integral $f_{\mathrm{SD}}(\beta)$ in the regime $\beta\to\infty$. $P^{N}_{M} (\beta)$ is a Pad\'e approximant computed from the PT series \eqref{upgit} at 3000-digit working precision. $\delta_{n} (\beta)$ is obtained using the result in \cite[eq 4]{jen}. The exact value is computed from the closed-form \eqref{nopita}.}
	\label{hoyaa}
\end{table}

\section{Conclusion}\label{conclusion}
In this paper, we proposed a prescription based on the method of finite-part integration of the generalized Stieltjes integral to sum divergent PT series expansion with coefficients that we map to the positive-power moments, $\mu_{2k} =\int_{0}^{\infty}x^{2k}\rho(x)\mathrm{d}x$, of some positive function $\rho(x)$. We applied the summation procedure on the divergent weak-field expansions of exact integral representations for various Heisenberg-Euler Lagrangians from QED in the case of a constant magnetic and magnetic-like self-dual background. In each of these examples, the procedure allowed us to transform the divergent alternating weak-field expansion into a novel convergent expansion in  inverse powers of perturbation parameter plus a correction term that led us to incorporate the known logarithmic leading-order behavior in the strong-field regime. This enabled us to construct extrapolants which can be used across a wide range of values for the field strength with considerable accuracy. Furthermore, we also showed how the method of finite-part integration can be used to evaluate in closed-form the exact integral representations of the Heisenberg-Euler Lagrangians. 

There are a few ways to improve the summation prescription we devised in this paper. We pointed out how the point-wise convergence of the solution to the underlying Stieltjes moment problem could potentially undermine or improve the convergence of the expansion. In this regard, future work on the subject could investigate procedures for solving the Stieltjes moment problem that could offer better point-wise convergence with less number of positive-power moments $\mu_k$ used as inputs. This could enhance the potential of the prescription to take on problems in certain applications where the perturbation coefficients are scarce. A promising alternative proposed in \cite{mead1984maximum} is an information-theoretic approach and is based on the maximization of the Shannon entropy functional. 

 In the case of a purely electric background so that $\beta = - \kappa$, $\kappa = (e E)^2/m^4 >0$, and the expansion \eqref{gagah} becomes non-alternating, the prescription in \eqref{gigh} sums the expansion \eqref{gagah} formally to
 \begin{align}
     f_s(\kappa) = \kappa \int_{0}^{\infty} \frac{\rho_s(x)}{1/\kappa-x^{2}} \mathrm{d}x = \kappa H_s(\kappa).
 \end{align}
 The formal integral $H_{\kappa}(\kappa)$ is then evaluated using a suitable prescription as, 
\begin{align}\label{ugay}
        H_s(\kappa)  = \mathrm{PV}\int_{0}^{\infty}\frac{\,\rho_s(x)}{1/\kappa- x^{2}}\mathrm{d}x \pm \frac{\pi i \sqrt{\kappa}}{2} \rho_s\left(\frac{1}{\sqrt{\kappa}}\right),
\end{align}
where the sign of the second term is a non-perturbative ambiguity and corresponds to how the the pole at $z = 1/\sqrt{\kappa}$ is evaded. In the case of the Heisenberg-Euler Lagrangian, the sign must be chosen so that the final result yields the positive value for the imaginary part since this gives the particle-antiparticle pair production rate. The first term in the right-hand of equation \eqref{ugay} is a Cauchy principal value integral and is evaluated explicitly using finite-part integration so that, 
\begin{align}\label{wikop}
    H_s(\kappa) = -\sum_{n=0}^{\infty} 
\frac{\mu^{(s)}_{-(2n+2)}}{\kappa^n} -\Delta(\kappa) \pm \frac{\pi i \sqrt{\kappa}}{2} \rho_s\left(\frac{1}{\sqrt{\kappa}}\right).
\end{align}
The negative-power moments, $\mu^{(s)}_{-(2n+2)}$, are the finite part integrals \eqref{pigil} and the term $\Delta(\kappa)$ missed by a naive term-by-term integration is given by,
\begin{align}
    \Delta(\kappa) =  \frac{\sqrt{\kappa}}{2}\ln\left(\sqrt{\kappa}\right)\left(\rho_s\left(\frac{1}{\sqrt{\kappa}}\right)-\rho_s\left(-\frac{1}{\sqrt{\kappa}}\right)\right).
\end{align}
This term provides the known leading-order behavior in the the strong-field regime for the real part of the Heisenberg-Euler Lagrangian. The regularization \eqref{wikop} can also be derived by performing analytic continuation, $S_s(\beta\to -\kappa) =H_s(\kappa)$, in the expansion \eqref{som} to the branch cut of $S_s(\kappa)$ along the negative real axis of the complex-$\beta$ plane. In which case, the sign ambiguity of the second term of the right-hand side of equation \eqref{wikop} corresponds to whether the branch cut is approached from above ($+$) or below ($-$). The details of the computation and the results are given in \cite{nonalternating}. 

\section*{Acknowledgments}
We acknowledge the Computing and Archiving Research Environment (COARE) of the Department of Science and Technology's Advanced Science and Technology Institute (DOST-ASTI) for providing access to their High-Performance Computing (HPC) facility. This work is funded by the University of the Philippines System through the Enhanced Creative Work Research Grant (ECWRG 2019-05-R). C.D. Tica acknowledges the Department of Science and Technology-Science Education Institute (DOST-SEI) for the scholarship grant under DOST ASTHRDP-NSC.


\begin{thebibliography}{00}

\bibitem{boyd}
Boyd, John P.``The devil's invention: asymptotic, superasymptotic and hyperasymptotic series." Acta Applicandae Mathematica 56 (1999): 1-98.


\bibitem{dingle}Dingle, Robert B. Asymptotic expansions: their derivation and interpretation. Academic Press, 1973.

\bibitem{wong2}Wong, Roderick. Asymptotic approximations of integrals. Society for Industrial and Applied Mathematics, 2001.

\bibitem{olver1997asymptotics} Olver, Frank. ``Asymptotics and special functions". AK Peters/CRC Press, 1997.

\bibitem{le2012large} Le Guillou, Jean-Claude, and Jean Zinn-Justin, eds. ``Large-order behaviour of perturbation theory". Elsevier, 2012.

\bibitem{hardy} Hardy, Godfrey Harold. Divergent series. Vol. 334. American Mathematical Soc., 2000.

\bibitem{bender1999advanced} Bender, Carl M., Steven Orszag, and Steven A. Orszag. ``Advanced mathematical methods for scientists and engineers I: Asymptotic methods and perturbation theory. Vol. 1." Springer Science and Business Media, 1999.

\bibitem{wong1}
Wong, R. ``Asymptotic and computational analysis." Lecture Notes in Pure and Appl. Math. Conference in Honor of Frank WJ Olver's 65th Birthday. Vol. 124. 1990.


\bibitem{mera2018fast} Mera, Héctor, Thomas G. Pedersen, and Branislav K. Nikolić. ``Fast summation of divergent series and resurgent transseries from Meijer-G approximants." Physical Review D 97.10 (2018): 105027.

\bibitem{borell} 
Dunne, G. V., Hall, T. M. (1999). Borel summation of the derivative expansion and effective actions. Physical Review D, 60(6), 065002.

\bibitem{sara}Pasquetti, Sara, and Ricardo Schiappa.``Borel and Stokes nonperturbative phenomena in topological string theory and c= 1 matrix models." Annales Henri Poincaré. Vol. 11. SP Birkhäuser Verlag Basel, 2010.


\bibitem{weniger1993summation} Weniger, Ernst Joachim, Jiří Čížek, and Francine Vinette. ``The summation of the ordinary and renormalized perturbation series for the ground state energy of the quartic, sextic, and octic anharmonic oscillators using nonlinear sequence transformations." Journal of mathematical physics 34.2 (1993): 571-609.


\bibitem{kazakov2002summation} Kazakov, D. I., and V. S. Popov.``On the summation of divergent perturbation series in quantum mechanics and field theory." Journal of Experimental and Theoretical Physics 95.4 (2002): 581-600.

\bibitem{gilewicz2002continued} Gilewicz, Jacek, et al.``Continued fractions, two-point Padé approximants and errors in the Stieltjes case." Journal of computational and applied mathematics 145.1 (2002): 99-112

\bibitem{alvarez2017new}Álvarez, Gabriel, and Harris J. Silverstone. ``A new method to sum divergent power series: educated match." Journal of Physics Communications 1.2 (2017): 025005.


\bibitem{baker1996pade} Baker, George A., et al. ``Pade Approximants: Encyclopedia of Mathematics and It's Applications, Vol. 59 George A. Baker, Jr., Peter Graves-Morris. Vol. 59." Cambridge University Press, 1996.

\bibitem{jen}Jentschura, Ulrich D., et al. "Resummation of QED perturbation series by sequence transformations and the prediction of perturbative coefficients." Physical Review Letters 85.12 (2000): 2446.

\bibitem{costin2019resurgent}Costin, Ovidiu, and Gerald V. Dunne. ``Resurgent extrapolation: rebuilding a function from asymptotic data. Painlevé I." Journal of Physics A: Mathematical and Theoretical 52.44 (2019): 445205.

\bibitem{costin2} Costin, Ovidiu, and Gerald V. Dunne. "Physical resurgent extrapolation." Physics Letters B 808 (2020): 135627.

\bibitem{bender1994determination} Bender, Carl M., and Stefan Boettcher.``Determination of $f (\infty)$ from the asymptotic series for $f(x)$ about $x= 0$." Journal of Mathematical Physics 35.4 (1994): 1914-1921.

\bibitem{hydro}Gani, V. A., et al. "Perturbative approach to the hydrogen atom in a strong magnetic field." Journal of Experimental and Theoretical Physics 96 (2003): 402-406.

\bibitem{hydro2}
Wellenhofer, C., D. R. Phillips, and A. Schwenk.``From weak to strong: Constrained extrapolation of perturbation series with applications to dilute Fermi systems." Physical Review Research 2.4 (2020): 043372.


\bibitem{suslov2001summing} Suslov, I. M.``Summing divergent perturbative series in a strong coupling limit. The Gell-Mann-Low function of the $\phi^4$ theory." Journal of Experimental and Theoretical Physics 93 (2001): 1-23.

\bibitem{le1990hydrogen} Le Guillou, J. C., and J. Zinn-Justin.``The hydrogen atom in strong magnetic fields: summation of the weak field series expansion." Current Physics–Sources and Comments. Vol. 7. Elsevier, 1990. 259-286.


\bibitem{weniger1996construction} Weniger, Ernst Joachim. ``Construction of the strong coupling expansion for the ground state energy of the quartic, sextic, and octic anharmonic oscillator via a renormalized strong coupling expansion." Physical review letters 77.14 (1996): 2859.


\bibitem{Tica_royal} Tica Christian D. and Galapon Eric A.  ``Continuation of the Stieltjes series to the large regime by finite-part integration ". Proc. R. Soc. A.4792023009820230098

\bibitem{cizik} Cizek, J., J. Zamastil, and L. Skála.``New summation technique for rapidly divergent perturbation series. Hydrogen atom in magnetic field." Journal of Mathematical Physics 44.3 (2003): 962-968.

\bibitem{bender1987maximum} Bender, Carl M., Lawrence R. Mead, and N. Papanicolaou.``Maximum entropy summation of divergent perturbation series." Journal of mathematical physics 28.5 (1987): 1016-1018.

\bibitem{mead1984maximum} Mead, Lawrence R., and Nikos Papanicolaou. ``Maximum entropy in the problem of moments." Journal of Mathematical Physics 25.8 (1984): 2404-2417.

\bibitem{monegato2009definitions} Monegato, G.``Definitions, properties and applications of finite-part integrals." Journal of computational and applied mathematics 229.2 (2009): 425-439.


\bibitem{galapon2} Galapon, Eric A. ``The problem of missing terms in term by term integration involving divergent integrals." Proceedings of the Royal Society A: Mathematical, Physical and Engineering Sciences 473.2197 (2017): 20160567.


\bibitem{galapon2016cauchy} Galapon, Eric A.``The Cauchy principal value and the Hadamard finite part integral as values of absolutely convergent integrals." Journal of Mathematical Physics 57.3 (2016): 033502.


\bibitem{regularizedlimit} Galapon, Eric A. ``Regularized Limit, analytic continuation and finite-part integration." Analysis and Applications, DOI: 10.1142/S021953052350001X (2023). arXiv preprint arXiv:2108.02013 (2021).

\bibitem{tica2018finite} Tica, Christian D., and Eric A. Galapon.``Finite-part integration of the generalized Stieltjes transform and its dominant asymptotic behavior for small values of the parameter. I. Integer orders." Journal of Mathematical Physics 59.2 (2018): 023509.

\bibitem{tica2019finite} Tica, Christian D., and Eric A. Galapon.``Finite-part integration of the generalized Stieltjes transform and its dominant asymptotic behavior for small values of the parameter. II. Non-integer orders." Journal of Mathematical Physics 60.1 (2019): 013502.


\bibitem{villanueva2021finite} Villanueva, Lloyd L., and Eric A. Galapon. ``Finite-part integration in the presence of competing singularities: Transformation equations for the hypergeometric functions arising from finite-part integration." Journal of Mathematical Physics 62.4 (2021): 043505.

\bibitem{galapon3} Galapon, Eric A. ``Integration by divergent integrals: Calculus of divergent integrals in term by term integration,'' 
https://doi.org/10.48550/arXiv.1709.08173.

\bibitem{schwinger}Schwinger, Julian. "On gauge invariance and vacuum polarization." Physical Review 82.5 (1951): 664.

\bibitem{dunne}Dunne, Gerald V. "Heisenberg–Euler effective Lagrangians: basics and extensions." From Fields to Strings: Circumnavigating Theoretical Physics: Ian Kogan Memorial Collection (In 3 Volumes). 2005. 445-522.

\bibitem{dunne_harris} Dunne, Gerald V., and Zachary Harris. "Higher-loop Euler-Heisenberg transseries structure." Physical Review D 103.6 (2021): 065015.

\bibitem{heisenber_euler}
Popov, V. S., Eletskij, V. L., Turbiner, A. V. (1978). Higher orders of the perturbation theory and summation of series in the quantum mechanics and field theory (No. ITEF--40 (1978)). Gosudarstvennyj Komitet po Ispol'zovaniyu Atomnoj Ehnergii SSSR.

\bibitem{dittrich}Dittrich, W.``One-loop effective potentials in quantum electrodynamics." Journal of Physics A: Mathematical and General 9.7 (1976): 1171.

\bibitem{walter}Dittrich, Walter, Wu-yang Tsai, and Karl-Heinz Zimmermann.``Evaluation of the effective potential in quantum electrodynamics." Physical Review D 19.10 (1979): 2929.

\bibitem{walter2} Dittrich, Walter, and Martin Reuter, eds. Effective Lagrangians in quantum electrodynamics. Berlin, Heidelberg: Springer Berlin Heidelberg, 1985.

\bibitem{dunne_shw} Dunne, Gerald V. "New strong-field QED effects at extreme light infrastructure: Nonperturbative vacuum pair production." The European Physical Journal D 55 (2009): 327-340.

\bibitem{nonalternating} Tica, Christian, Galapon Eric,  Blancas  Philip Jordan, ``Exact Evaluation and extrapolation of the divergent expansion for the  Heisenberg-Euler Lagrangian II: Non-alternating Case,''In preparation. 

\bibitem{brychkov2018handbook} Brychkov, Yu A., Oleg Igorevich Marichev, and Nikolay V. Savischenko. Handbook of Mellin transforms. Chapman and Hall/CRC, 2018.

\bibitem{four}Expanding a smooth function in terms of Laguerre polynomials, https://functions.wolfram.com/Polynomials/LaguerreL/31/01/, [Accessed 12-May-2023]

\bibitem{ortho} The orthonormality relation involving Laguerre polynomials, https://functions.wolfram.com/Polynomials/LaguerreL/25/02/, [Accessed 12-May-2023]

\bibitem{laguerree} The power series representation of Laguerre polynomials, https://functions.wolfram.com/Polynomials/LaguerreL/02/, [Accessed 12-May-2023]

\bibitem{eigenweb} G. Guennebaud, B. Jacob et al.,``Eigen v3,"  http://eigen.tuxfamily.org, 2010.


\bibitem{mpfr} 
MPFR extended precision data types, http://www.holoborodko.com/pavel/mpfr/,
[Accessed 23-February-2022].


\bibitem{honda}Honda, Masazumi.``On perturbation theory improved by strong coupling expansion." Journal of High Energy Physics 2014.12 (2014): 1-44.

\bibitem{self_dual1}
Dunne, Gerald V., and Christian Schubert.``Two-loop self-dual Euler-Heisenberg Lagrangians (I): real part and helicity amplitudes." Journal of High Energy Physics 2002.08 (2002): 053.

\bibitem{self_dual2}Dunne, Gerald V., and Christian Schubert.``Two-loop self-dual Euler-Heisenberg Lagrangians (II): Imaginary part and Borel analysis." Journal of High Energy Physics 2002.06 (2002): 042.

  
\bibitem{NIST}https://functions.wolfram.com/ZetaFunctionsandPolylogarithms/Zeta2/03/01/02/01/0005/
[Accessed 8-December-2023]






\end{thebibliography}
\end{document}